\newcommand{\field}[1]{\mathbb{#1}}
\newcommand{\R}{\field{R}}
\newcommand{\Pd}{\field{P}} 
\newcommand{\Qd}{\field{Q}}
\newcommand{\kmax}{{k_{\rm max}}}
\newcommand{\dom}{{\rm dom}}
\newcommand{\essinf}{{\rm ess \, inf}}
\newcommand{\ess}{{\rm ess}\!\!\!}
\newcommand{\MaxLoss}{{\rm MR}}
\newcommand{\Pbar}{{\overline{\Pd}}}
\newcommand{\thbar}{{\overline{\theta}}}
\newcommand{\Labar}{{\overline{\Lambda}}}
\newcommand{\thmin}{{\theta_{\rm min}}}
\newcommand{\intdom}{{\mbox{int\;dom\;}}}
\newtheorem{corollary}{Corollary}
\newtheorem{lemma}{Lemma}
\newtheorem{proposition}{Proposition}
\newtheorem{theorem}{Theorem}
\newtheoremstyle{remarkstyle}
  {5mm}
  {5mm}
  {\normalfont}
  {0pt}
  {\bfseries}
  {.}
  {2mm}
  {}
\theoremstyle{remarkstyle}
\newtheorem{remark}{Remark}
\begin{document}

\title{Measuring Distribution Model Risk\thanks{Thomas Breuer, PPE Research Centre, FH Vorarlberg, \texttt{thomas.breuer@fhv.at}. Imre Csisz\'ar, Alfr\'ed R\'enyi Institute of Mathematics, Hungarian Academy of Sciences, \texttt{csiszar.imre@mta.renyi.hu}. This work has been supported by the Hungarian National Foundation for Scientific Research under Grant K76088 and by the Austrian Forschungsf\"{o}rderungsgesellschaft in its Josef Ressel Center for Optimisation under Uncertainty. The second
author has benefited from discussions with Franti\u{s}ek Mat\'u\u{s}.
}
}

\author{Thomas Breuer \and Imre Csisz\'ar}
\date{21 January 2013}
\maketitle

\begin{abstract}
We propose to interpret distribution model risk as sensitivity of expected loss to changes in the risk factor distribution, and to measure the distribution model risk of a portfolio by the maximum expected loss over a set of plausible distributions defined in terms of some divergence from an estimated distribution. The divergence may be relative entropy, a Bregman distance, or an $f$-divergence. We give formulas for the calculation of distribution model risk and explicitly determine the worst case distribution from the set of plausible distributions. We also give formulas for the evaluation of divergence preferences describing ambiguity averse decision makers.
\end{abstract}

\noindent
{\bf Keywords:} 
multiple priors, model risk, ambiguity aversion, multiplier preferences, divergence preferences, stress tests, relative entropy, f-divergence, Bregman distance, maximum entropy principle, exponential family

\vspace{.3cm} \noindent
{\bf JEL classification:} D81, C44, C60, G01, G32, M48\\
{\bf AMS classification:} 62C20, 90B50, 91B30, 94A17 

\newpage

\section{The problem of model risk}
Financial risk measurement, pricing of financial instruments, and portfolio selection are all based on statistical models. If the model is wrong, risk numbers, prices, or optimal portfolios are wrong. Model risk quantifies the consequences of using the wrong models in risk measurement, pricing, or portfolio selection.  

The two main elements of a statistical model in finance are a risk factor distribution and a pricing function. Given a portfolio (or a financial instrument), the first question is: On which kind of random events does the value of the portfolio depend? The answer to this question determines the state space $\Omega$.\footnote{It is possible to choose a larger state space including variables which do not affect the value of the given portfolio. This could allow to compare different portfolios, which do not all depend on the same risk factors. Typically modellers try to keep the number of risk factors small and therefore use a smaller state space. With various techniques they try model some risk factors as a function of a smaller set of risk factors. Thus the number of risk factors actually used in the model, although it may go into the thousands, is typically much smaller than the number of variables influencing the loss.} A point $r\in\Omega$ is specified by a collection of possible values of the risk factors. The state space codifies our lack of knowledge regarding all uncertain events affecting the value of a given instrument or portfolio. The specification of a distribution class and some parameter estimation procedure applied to historical data determines some best guess risk factor distribution, call it $\Pd_0$. The second central element is a pricing function $X: \Omega\rightarrow \R$ describing {\em how} risk factors impact the portfolio value at some given future time horizon. We work in a one-stage set-up. Often modellers try to use only risk factors which are (derived from) prices of basic financial instruments. Describing the price of the portfolio as a function of the prices of these basic instruments is a modelling exercise, which is prone to errors. It involves asset pricing theories of finance with practically non-trivial assumptions on no arbitrage, complete markets, equilibrium, etc. Together the risk factor distribution and the pricing function determine the profit loss distribution. In a last step, a risk measure associates to the profit loss distribution a risk number describing a capital requirement.

Corresponding to the two central elements of a statistical model we distinguish two kinds of model risk: distribution model risk and pricing model risk. This paper is concerned with {\em distribution model risk}.\footnote{\cite{Gibson2000} uses the term model risk for what we call distribution model risk. For a first classification of model risks we refer to \cite{CrouhyGalaiMark1998}. Distribution model risk encompasses both estimation risk and misspecification risk in the sense of \citet{KerhofMelenbergSchumacher2010}, but here we do not need to distinguish the two.} (For an interesting approach to pricing model risk we refer to \citet{Cont2006}.) Although $\Pd_0$ is a best guess of the risk factor distribution, one is usually aware that due to model specification errors or estimation errors the data generating process might be different from $\Pd_0$. Distribution model risk should quantify the consequences of working with $\Pd_0$ instead of the true but unknown data generating process. We propose to measure distribution model risk by 
\begin{equation}
\label{eq-Delbaenrep}
\MaxLoss:= - \inf_{\Pd\in\Gamma} E_\Pd(X)
\end{equation}
where $\Gamma$ is some set of plausible alternative risk factor distributions. So $\MaxLoss$ is the negative of the worst expected value which could result if the risk factor distribution is some unknown distribution in $\Gamma$. 
We propose to choose for $\Gamma$ balls of distributions, defined in terms of some divergence, centered at $\Pd_0$: 
\begin{equation}
\label{eq-Gamma}
\Gamma=\{\Pd: D(\Pd\, || \, {\Pd_0}) \leq k\},
\end{equation}
where the divergence $D$ could be the relative entropy (synonyms: Kullback-Leibler distance, $I$-divergence), some Bregman distance, or some $f$-divergen\-ce. $\Gamma$ contains all risk factor distributions $\Pd$ whose divergence from ${\Pd_0}$ is smaller than some radius $k>0$. The parameter $k$ has to be chosen by hand 
and describes the degree of uncertainty about the risk factor distribution. For larger values of $k$ the set of plausible alternative distributions is larger, which is appropriate for situations in which there is more model uncertainty. 
In Section~\ref{sec-relative entropy} we give the definitions of various divergences and discuss the choice of divergence $D$.

In a previous paper (\cite{BreuerCsiszar2012}) we have addressed 
the problem \eqref{eq-Delbaenrep} for the special case where $D$ is the relative entropy, assuming some regularity conditions which ensure the worst case distribution solving
\eqref{eq-Delbaenrep} is from some exponential family.
The present paper first extends those results, giving the solution for the pathological
cases when these regularity  
conditions are not met (Section~\ref{section-RelEnt-pathological}). 
Second, as main mathematical result, we provide the solution to 
Problem~\eqref{eq-Delbaenrep}, including the characterization
of the minimiser when it exists, for $\Gamma$ of the form \eqref{eq-Gamma} defined in terms of a convex 
integral functional (Section~\ref{sec-CIF}). The special 
cases of Bregman balls and $f$-divergence balls are treated in
Section~\ref{sec-applications}. 
Finally, in Section~\ref{sec-divergence preferences} 
we will address the related, mathematically simpler, problem 
\begin{equation}
\label{DP}
W:= \inf_{\Pd} \left[ E_\Pd(X) + \lambda D(\Pd\, || \, {\Pd_0})\right], \:\:\:\: \lambda > 0.
\end{equation}
Decision makers with divergence preferences rank alternatives $X$ by this 
criterion. 
We apply the methods of Section~\ref{sec-CIF} to derive an explicit solution for
the divergence preference problem~\eqref{DP}.

Mathematically, our approach will be to exploit the relationship of
Problem~\eqref{eq-Delbaenrep} to that of minimizing convex integral 
functionals (and specifically relative entropy) under moment constraints. 
The tools we need do not go beyond convex duality for $\R$ and $\R^2$,
and many results directly follow from known ones about the moment
problem. While in the literature attention is frequently restricted
to essentially bounded $X$, here the $\Pd_0$-integrability of $X$ suffices.

\section{Relation to the literature}
Problem \eqref{eq-Delbaenrep} has been addressed in the literature in
two related contexts: coherent risk measures and ambiguity. Law-invariante risk 
measures assign to a profit loss distribution a number interpreted as risk capital. 
\citet{ADEH1999} and \citet{FollmerSchied2004} formulated requirements for risk measures and coined the terms `coherent' resp. `convex' for risk measures fulfulling them. {\em Every} coherent risk measure can be represented as \eqref{eq-Delbaenrep} for some closed convex set $\Gamma$ of probabilities.\footnote{The representation theorem is due to \citet{ADEH1999}
for finite sample spaces, for general probability spaces see \citet{Delbaen2000} or \citet{FollmerSchied2002}. Its formal statement is not needed for our purposes.} The risk capital required for a portfolio is the worst expected loss over the set $\Gamma$. 

In the context of risk measurement, the model risk measure \eqref{eq-Delbaenrep} is yet another coherent risk measure. 
Defining the risk measure by the set $\Gamma$ via the representation \eqref{eq-Delbaenrep} is natural when addressing distribution model risk.  Risk measures defined in terms of the profit loss distribution, 
like Value at Risk or Expected Shortfall, rely on a specific distribution model, which may be
misspecified or misestimated. For a fixed portfolio, represented by a pricing function $X$, a different risk factor distribution gives rise to a different profit loss distribution, and therefore to a different risk capital requirement. Expression \eqref{eq-Delbaenrep} measures exactly this model dependence.\footnote{One could object that Expected Shortfall  is coherent and therefore can be represented by eq.~\eqref{eq-Delbaenrep} as a maximum expected loss over some set $\Gamma$ of alternative
distribution models. The set $\Gamma$ equals $\{\Pd={\Pd_0}[.|A]:{\Pd_0}(A)\geq \alpha\}$, which contains distributions so different from $\Pd_0$ that they are hardly plausible to arise from the same historical data by estimation or specification errors. Or, one could represent expected shortfall by eq.~\eqref{eq-Delbaenrep} with $\Gamma$ as in \eqref{eq-Gamma}, taking $D=D_f$ as in \eqref{fdiv} below, with the pathological convex function $f$ equal to $0$ in the interval $[0,1/\alpha]$ and $+\infty$ otherwise \cite[Theorem 4.47]{FollmerSchied2004}. But this $f$ does not meet the assumptions in Section~\ref{sec-relative entropy} and the corresponding $D_f$ is not a divergence in our sense.}

On the other hand, Problem \eqref{eq-Delbaenrep} describes ambiguity averse preferences: A widely used class of preferences allowing for ambiguity aversion are the multiple priors preferences, also known as maxmin expected utility preferences, axiomatised by \citet{GilboaSchmeidler1989}.\footnote{\citet{GilboaSchmeidler1989} worked in the setting of \citet{AnscombeAumann1963} using lottery acts. \citet{Casadesus-Masanell2000} translated their approach to Savage acts. 
In the Gilboa-Schmeidler theory the utility of outcomes occurs separately, whereas in our notation the utility is part of the function $X$, which we would interpret as  the utility of outcomes.} 
(Another description of ambiguity aversion are the divergence preferences \eqref{DP}.) 
Agents with multiple priors preferences choose acts $X$ with higher worst expected utility, where the worst case is taken over a closed convex set set $\Gamma$ of finitely additive probabilities. The set $\Gamma$ is interpreted as a set of priors held by the agent, and ambiguity is reflected by the multiplicity of the priors. Interpreting the choice of a portfolio as an act, the risk measure representation \eqref{eq-Delbaenrep} and the multiple priors preference representation agree, see \citet{FollmerSchied2002}. A decision maker who ranks portfolios by lower values of some coherent risk measure displays multiple priors preferences. And vice versa, a decision maker with multiple priors preferences acts as if she were minimising some coherent risk measure. 

In the context of the  Gilboa-Schmeidler theory, our results provide explicit expressions for the decision criterion of ambiguity averse decision makers, in the special case that the priors set $\Gamma$ is given by \eqref{eq-Gamma}. Choosing the same $\Gamma$ for all agents may be at odds with a descriptive view of real agents' preferences. But from a normative point of view our choice of $\Gamma$ in \eqref{eq-Gamma} is motivated by general arguments (Section~\ref{sec-relative entropy}). Our results can serve as a starting point for the further analysis of portfolio selection and contingent claim pricing under model uncertainty, extending, among others, work of \cite{Avellaneda1996}, \cite{Friedman2002, Friedman2002b}, \cite{Calafiore2007}.

In the present context, the choice of $\Gamma$ by \eqref{eq-Gamma} with 
$D(\Pd || \Pd_0)$ equal to relative entropy, 
has been proposed by \cite{HansenSargent2001}, see also
\cite{Ahmadi-Javid2011} and \cite{BreuerCsiszar2012}.  \cite{Friedman2002}
also used relative entropy balls as sets of possible models. \cite{HansenSargent2001, HansenSargent2007, HansenSargent2008}, 
\cite{BarillasHansenSargent2009} and others have used a relative entropy-based
set of alternative models. Their work is set in a multiperiod framework. It
deals with questions of optimal choice, whereas we take the portfolio $X$ as
given. \cite{MaccheroniMarinacciRustichini2006} presented a unified framework
encompassing both the multiple priors preference \eqref{eq-Delbaenrep}
and the divergence preferences \eqref{DP}. They proposed to use
weighted $f$-divergences, which are also covered in our framework. 
\citet[Theorem
4.2]{BenTalTeboulle2007} showed that their optimised certainty equivalent for
a utility function $u$ can be represented as divergence preference \eqref{DP}
with $D$ equal to the $f$-divergence with the function $f$ satisfying $u(x) =
- f^* (-x)$. For both, 
the worst case solution is a member
of the same generalised exponential family. This paper makes clear the
reasons. 

Finally but importantly, the work of \cite{Ahmadi-Javid2011} has to be
cited for solutions of \eqref{eq-Delbaenrep} and \eqref{DP}, in case of
relative entropy and of $f$-divergences, in the form of convex 
optimization formulas involving two real variables (one in the case of relative entropy).
The relationship of these results to ours will not be discussed here but
we mention that in \cite{Ahmadi-Javid2011} 
the pathological cases for relative entropy treated in 
Section~\ref{section-RelEnt-pathological} were not addressed, and the 
results for $f$-divergences were obtained under the assumptions that $f$
is cofinite and $X$ is essentially bounded.

\section{Measures of plausibility of alternative risk factor distributions}
\label{sec-relative entropy}
We define divergences between non-negative functions on the state 
space~$\Omega$, which may be any set equipped with a $\sigma$-algebra 
not mentioned in the sequel, and with some measure $\mu$ on that
$\sigma$-algebra.
Here $\mu$ may or may not be a probability measure. Then the
divergence between distributions (probability measures on $\Omega$)
absolutely continuous with respect to $\mu$
is taken to be the divergence between the corresponding density functions.
In our terminology, a divergence is non-negative and vanishes only
for identical functions or distributions. (Functions which are equal $\mu$-a.e. are
regarded as identical.) A divergence need not be a metric, may be
non-symmetric, and the divergence balls need not form a basis for a
topology in the space of probability distributions.

The relative entropy of two non-negative functions $p,p_0$  is defined as 
\begin{equation*}I(p || \, p_0) :=  \int_\Omega [p ( r ) \log \frac{p( r )}{p_0( r )} - p( r ) + p_0( r )] d\mu( r ).
 \end{equation*}
If $p, p_0$ are $\mu$-densities of probability distributions $\Pd, \Pd_0$ this reduces to the original definition of \citet{KullbackLeibler1951}, 
$$I(\Pd \, || \, \Pd_0)= \int \log \frac{d\Pd}{d\Pd_0}( r ) d\Pd( r
)\quad\mbox{if}\;\;\Pd\ll\Pd_0. $$
If a distribution $\Pd$ is not absolutely continuous with respect to $\Pd_0$,
take $I(\Pd \, || \, \Pd_0) = +\infty$.\footnote{Note that $I(\Pd \, || \, \Pd_0)$
is a less frequent notation for relative entropy than $D(\Pd \, || \, \Pd_0)$,
it has been chosen here because we use the latter to denote any divergence.}

Bregman distances, introduced by \cite{Bregman1967},
and $f$-divergences, introduced by \cite{Csiszar1963, Csiszar1967}, and \citet{AliSilvey1966},
are classes of divergences parametrised by convex functions 
$f:(0,\infty)\rightarrow\R$, extended to $[0,\infty)$ by setting
  $f(0):=\lim_{t\to 0}f(t)$. Below, $f$ is assumed strictly
convex but not necessarily differentiable.


The Bregman distance of non-negative (measurable) functions $p,p_0$ on 
$\Omega$, with respect to a (finite or $\sigma$-finite) measure $\mu$ on 
$\Omega$ is defined by
\begin{equation}
\label{Bfp}
B_{f,\mu}(p,p_0):=\int_\Omega \Delta_f(p( r ), p_0( r )) \mu( dr ),
\end{equation}
where, for $s,t$ in $[0,+\infty)$,
\begin{equation}\label{del}
\Delta_f(s,t):=  \left\{ \begin{array}{ll}
f(s) - f(t) -f'(t)(s-t)& \mbox{if $t>0$ or $t=0$, $f(0) < +\infty$} \\
s\cdot (+ \infty)  & \mbox{if $t=0$ and $f(0)=+\infty$.}
 \end{array} 
 \right.
\end{equation}
If the convex function $f$ is not differentiable at $t$, the right or left derivative is taken for $f'(t)$ according as $s>t$ or $s<t$.

The Bregman distance of distributions $\Pd \ll \mu, \Pd_0 \ll \mu$ is
defined by
\begin{equation}
\label{Bf}
B_{f,\mu}(\Pd,\Pd_0):=B_{f,\mu}\left(\frac{d\Pd}{d\mu}, \frac{d\Pd_0}{d\mu}\right).
\end{equation}
Clearly, $B_{f,\mu}$ is a bona fide divergence whenever $f$ is strictly convex in $(0,+\infty)$.
For $f(s)=s\log s - s + 1$, $B_f$ is the relative entropy $I$. For 
$f(s)=-\log s$, $B_f$ is the Itakura-Saito distance. For $f(s)=s^2$, $B_f$ is 
the squared $L^2$-distance. 

The $f$-divergence between non-negative (measurable) functions $p$ and 
$p_0$
is defined, when $f$ additionally satisfies $f(s)\ge f(1)=0$,\footnote{This makes sure that \eqref{fdiv} indeed 
defines a divergence between any non-negative functions; if attention is
restricted to probability densities resp. probability distributions, it
suffices to assume that $f(1)=0$.}
by
\begin{equation}\label{fdiv}
D_f(p || p_0):= \int_\Omega f\left(\frac{p(r)}{p_0(r)}\right) p_0(r)\, \mu(dr).
\end{equation}
At places where $p_0( r )=0$, the integrand by convention is taken to be  
$p( r )\lim_{s\to\infty} f(s)/s$.
The $f$-divergence of distributions $\Pd \ll \mu, \Pd_0 \ll \mu$, defined
as the $f$-divergence of the corresponding densities, does not depend on
$\mu$ and is equal to
 
\begin{equation}\label{fdi}
D_f(\Pd || \Pd_0):= \int_\Omega f\left(\frac{d\Pd_a}{d\Pd_0}\right) d\Pd_0 
+\Pd_s(\Omega)\lim_{s\to\infty}\frac{f(s)}{s},
\end{equation}
where $\Pd_a$ and $\Pd_s$ are the absolutely continuous and singular 
components of $\Pd$ with respect to $\Pd_0$. Note that if $f$ is
cofinite, i.e., if the limit in \eqref{fdi} is $+\infty$,
then $\Pd\ll\Pd_0$
is a necessary condition for the finiteness of $D_f(\Pd || \Pd_0)$,
while otherwise not. 

For $f(s)=s\log s - s + 1$, $D_f$ is the relative entropy. For $f(s)=-\log s + s + 1$, $D_f$ is the reversed relative entropy. For $f(s)=(\sqrt{s}-1)^2$, $D_f$ is the squared Hellinger distance. For $f(s)=(s-1)^2/2$, $D_f$  is the relative Gini concentration index. 
For more details about $f$-divergences see \citet{LieseVajda1987}. 

Relative entropy appears the most versatile divergence measure for probability distributions or non-negative functions, extensively used in diverse fields including statistics, information theory, statistical physics, see e.g. \citet{Kullback-book}, \citet{CsiszarKoerner2011}, \citet{Jaynes1957a}. For its applications in econometrics, see \citet{GolanJudgeMiller1996} or \citet{Grechuk2009}. In the context of this paper, \cite{HansenSargent2001} have used expected value minimization over relative entropy balls. Arguments for \eqref{eq-Gamma} with any $f$-divergence in the role of $D$, or more generally with a weighted $f$-divergence involving a (positive) weight function $w( r )$ in the integral in \eqref{fdiv}, have been put forward by \citet{MaccheroniMarinacciRustichini2006}. Results of \cite{Ahmadi-Javid2011} indicate advantages of relative entropy over other $f$-divergences also in this context.
In another context, \citet{GrunwaldDawid2004} argue that distances between distributions might be chosen in a utility dependent way. Relative entropy is natural only for decision makers with logarithmic utility. Picking up this idea, for decision makers with non-logarithmic utility one might define the radius in terms of some utility dependent distance. We are unaware of references employing \eqref{eq-Gamma} with Bregman distances, although this would appear natural, particularly as Bregman distances have a beautiful interpretation as measuring the expected utility losses due to the convexity of $f$.

In the context of inference, the method of maximum entropy (or relative
entropy minimization)  
is distinguished by axiomatic considerations. \citet{ShoreJohnson1980},
\citet{ParisVencovska1990}, 
and \citet{Csiszar1991} showed that it is the only method that satisfies certain 
intuitively desirable postulates.
Still, 
relative entropy cannot be singled out as providing the only
reasonable method of inference. \citet{Csiszar1991} determined what 
alternatives (specifically, Bregman distances and $f$-divergences) come
into account if some postulates are relaxed. In the context of 
measuring risk or evaluating preferences under ambiguity aversion, axiomatic results distinguishing relative
entropy or some other divergence are not available.

 
An objection against the choice of the set $\Gamma$ in \eqref{eq-Gamma}
with $D$ equal to relative entropy or a related divergence should also be
mentioned. It is that all distributions in this set are absolutely continuous with respect to $\Pd_0$. 
In the literature of the subject, even if not working with divergences, it is a
rather common assumption that the set of feasible distributions is
dominated; one notable exception is \cite{Cont2006}.  Sometimes the assumption that $\Gamma$ is dominated is hard to justify. 
For example, in a multiperiod setting 
where $\Omega$ is the canonical space of continuous paths and $\Gamma$ is a set of martingale laws for the canonical process, corresponding to different scenarios of volatilities, this $\Gamma$ is typically not dominated (see \citet{NutzSoner2012}). 
Or, if we use a continuous default distribution $\Pd_0$, can we 
always be sure that the data generating process is not discrete? And should 
it not be possible to approximate in some appropriate sense a continuous 
distribution by discrete ones? 

If an $f$-divergence with a non-cofinite $f$ is used, then 
the set $\Gamma$ of alternative distributions is not dominated, see 
\eqref{fdi}. But since all distributions singular
to $\Pd_0$ have the same $f$-divergence from $\Pd_0$, 
even $f$-divergences with non-cofinite $f$ are not appropriate to describe the approximation of a continuous distribution by discrete distributions.
Bregman distances have a similar shortcoming. In practice, this
objection does not appear a serious obstacle, for
the set $\Gamma$ of theoretical alternatives may be extended by distributions 
close to them 
in an appropriate sense involving closeness of expectations, which
negligibly changes the theoretical risk value \eqref{eq-Delbaenrep}. 


\section{Intuitive relation of worst case risk and  maximum entropy inference}\label{sec-intuition}
The purpose of this section is to develop intuition on the relation between Problem~\eqref{eq-Delbaenrep} and the maximum entropy problem. Let us consider the mathematically simplest case of
Problem \eqref{eq-Delbaenrep}, when $\Gamma$ is a sufficiently small relative 
entropy ball. Then Problem~\eqref{eq-Delbaenrep} requires the evaluation of
\begin{equation}
\label{eq-genMaxLoss}
\inf_{\Pd: I(\Pd \, || \, \Pd_0)\leq k} E_\Pd (X) =: V(k),
\end{equation}
for sufficiently small $k$. 
We follow \cite{BreuerCsiszar2012}, using techniques familiar in the theory of exponential families, see
\citet{Barndorff-Nielsen1978}, and large deviations theory, see \citet{DemboZeitouni}. 
The meaning of `sufficiently small' will be made
precise later in this section. The cases when 
the relative entropy ball is not `sufficiently small' will be treated in
Section~\ref{section-RelEnt-pathological}.

Observe that Problem~\eqref{eq-Delbaenrep} with $\Gamma$ a relative entropy 
ball is ``inverse'' 
to a problem of maximum              entropy inference. If an unknown
distribution $\Pd$ had to be inferred when the available information specified
only a feasible set of distributions, and a distribution ${\Pd_0}$ were given
as a prior guess of $\Pd$, the maximum entropy\footnote{This name refers to
  the special case when 
  ${\Pd_0}$ is the uniform distribution;
  then minimising $I(\Pd\, || \, {\Pd_0})$ is equivalent to maximising the
  Shannon differential entropy of $\Pd$.} principle would suggest to 
infer the feasible distribution $\Pd$ which minimizes  
$I(\Pd\, || \, {\Pd_0})$. In particular, if the feasible distributions were 
those with $E_\Pd(X)=b$, for a constant $b$, we would arrive at the problem
\begin{equation}
\label{eq-MaxEnt}
\inf_{\Pd: E_\Pd (X)=b}I(\Pd\, || \, {\Pd_0}).  
\end{equation}
Note that the objective function of problem \eqref{eq-Delbaenrep} is the
constraint in the maximum entropy problem~\eqref{eq-MaxEnt}, and vice versa
(Fig.~\ref {fig-duality}). It is therefore intuitively expected that (taking
$k$ and $b$ suitably related) both problems are solved by the same
distribution $\Pbar$,
\begin{equation}
\label{eq-conjecture}
\arg \min_{\Pd: I(\Pd\, || \, {\Pd_0})\leq k} E_\Pd (X) = \arg \min_{\Pd: E_\Pd (X)=b}I(\Pd \, || \,  {\Pd_0})=:\Pbar,
\end{equation}
see Fig.~\ref{fig-duality}. The literature on the maximum entropy problem
establishes that (under some regularity conditions) the solution $\Pbar$ is a
member of the exponential family of distributions $\Pd(\theta)$ with 
canonical statistic $X$, which have a ${\Pd_0}$-density 
\begin{equation}
\label{eq-wcs}
\frac{d\Pd(\theta)}{d{\Pd_0}}( r ):=\frac{e^{\theta X( r )}}{\int e^{\theta X( r )} d{\Pd_0}( r )}=e^{\theta X( r )-\Lambda(\theta)},  
\end{equation}
where $\theta\in\R$ is a 
parameter and the function $\Lambda$ is defined as 
\begin{equation}
\label{eq-def-Lambda}
\Lambda(\theta):=\log\int e^{\theta X( r )} d{\Pd_0}( r ).
\end{equation}
Among actuaries the distributions from the exponential family are often 
referred to as Esscher transforms.

\begin{figure}
   \centering
    \includegraphics[width=7.6cm]{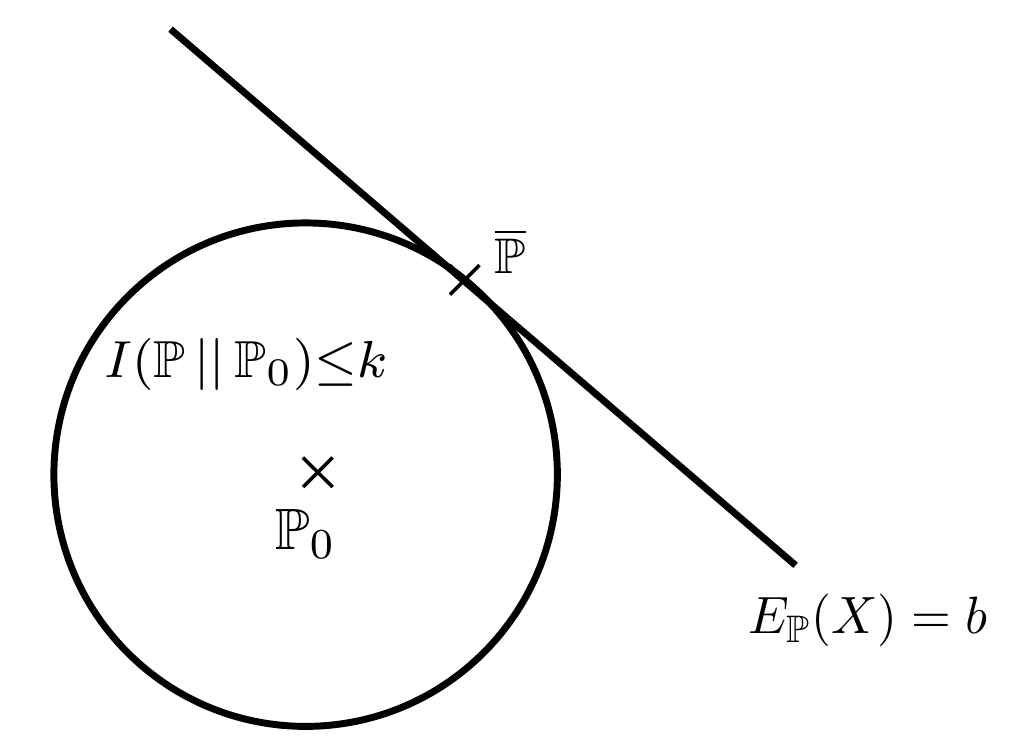}
\caption{\label{fig-duality}
{\bf Relation of Worst Case and Maximum Entropy problem.} What is the objective 
function in 
problem~\eqref{eq-Delbaenrep} is the constraint in the maximum 
entropy problem~\eqref{eq-MaxEnt}, and vice versa.}
\end{figure}

For members $\Pd(\theta)$ of the exponential family, the expected profit can be written as
\begin{equation}
\label{eq-Lambda'}
E_{\Pd(\theta)} (X)=\int X( r )\exp(\theta X( r ) - \Lambda(\theta))d{\Pd_0}( r ) = \Lambda'(\theta),
\end{equation}
and the relative entropy to $\Pd_0$ is 
\begin{eqnarray}
I(\Pd(\theta)\, || \, {\Pd_0}) & = & \int \log \frac{d\Pd(\theta)}{d{\Pd_0}}( r )d\Pd(\theta)( r ) = \int (\theta X( r ) - \Lambda(\theta)) d\Pd(\theta)( r ) \nonumber \\
& = & \theta E_{\Pd(\theta)} (X) -\Lambda(\theta) =  \theta\Lambda'(\theta)-\Lambda(\theta). \label{eq-A2}
\end{eqnarray}

{\em If} the identity \eqref{eq-conjecture} holds and the solution of Problem~\eqref{eq-genMaxLoss} is from the exponential family, then one can determine which member of the exponential family solves the problem, by solving the equation 
\begin{equation}
\label{eq-def-theta}
\theta \Lambda'(\theta) - \Lambda(\theta)=k
\end{equation}
for $\theta$. Typically, \eqref{eq-def-theta} has both a positive and a
  negative solution, and the corresponding $\Pd(\theta)$ is the maximiser
  resp. minimiser of  $E_{\Pd} (X)$ subject to $I(\Pd\, || \, {\Pd_0})\le
  k$. Call the negative solution $\thbar$. The solution to Problem~\eqref{eq-genMaxLoss} can then be expressed in terms of the $\Lambda$-function:
$$\inf_{\Pd: I(\Pd\, || \, {\Pd_0})\leq k} E_\Pd (X) = \inf_{\theta: \theta\Lambda'(\theta)-\Lambda(\theta)\leq k} \Lambda'(\theta) = \Lambda'(\thbar),$$
(The last equality follows from the convexity of $\Lambda$.)
This solution is illustrated in Fig.~\ref{fig-theorems 1 and 2}. The worst expected profit $V(k)$ is the slope of the tangent to the curve $\Lambda(\theta)$ passing through $(0,-k)$. $\thbar$ is the $\theta$-coordinate of the tangent point. From the figure it is obvious that $\thbar\Lambda'(\thbar)-\Lambda(\thbar)=k$.

\begin{figure}
   \centering
    \includegraphics[width=7.4cm,height=6cm]{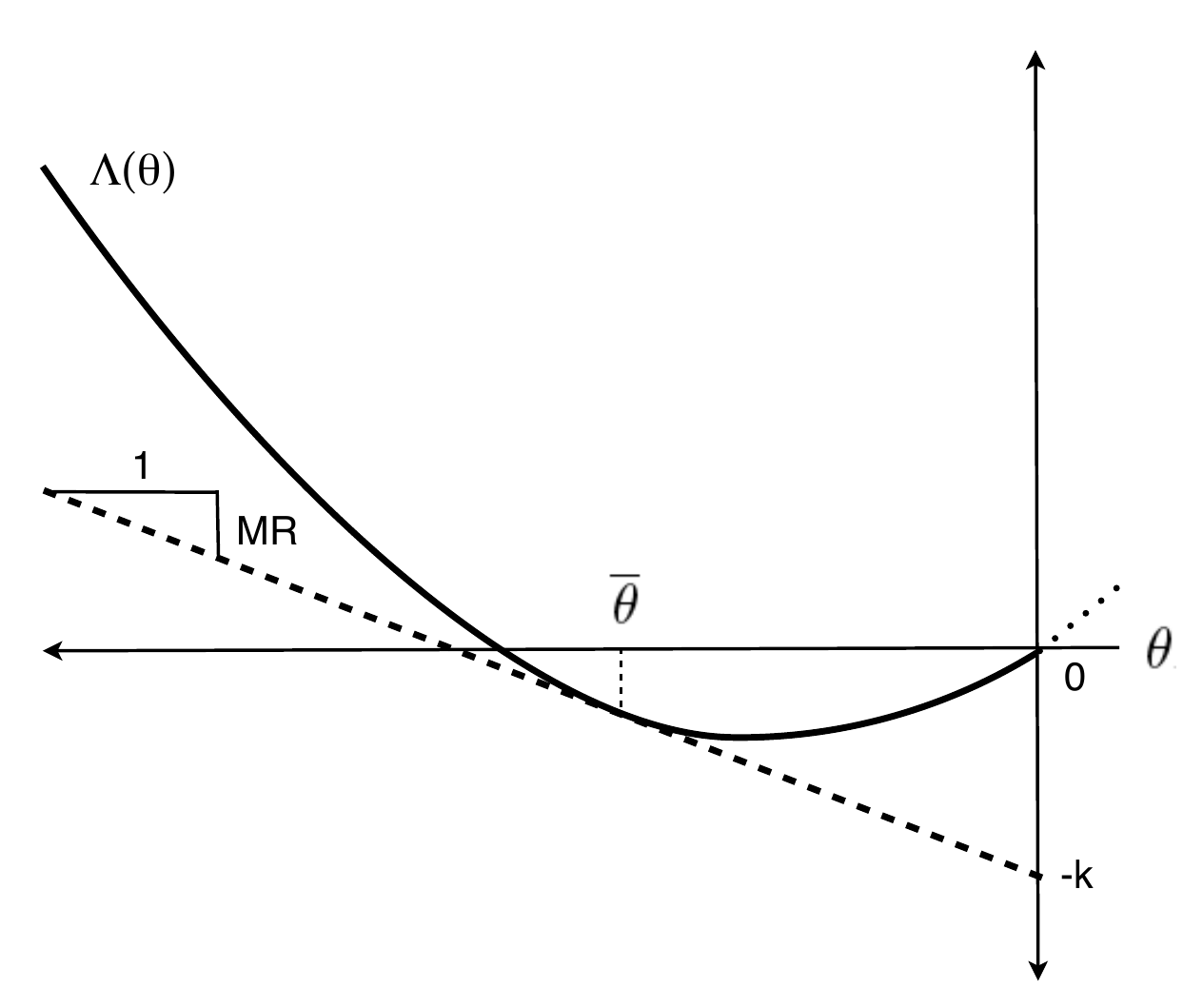}
\caption{\label{fig-theorems 1 and 2}
{\bf Solution of the worst case problem problem from the $\Lambda$-function.} The optimal value achieved for Problem~\eqref{eq-genMaxLoss} is the slope of the tangent to the curve $\Lambda(\theta)$ passing through $(0,-k)$. $\thbar$ is the $\theta$-coordinate of the tangent point.}
\end{figure}

So far the intuition about the solution in what one could call the generic
case. It requires two important assumptions: Identity \eqref{eq-conjecture}
should hold 
and the equation \eqref{eq-def-theta}
should have a (unique) negative solution 
$\thbar$. \citet{BreuerCsiszar2012} give precise conditions under which the 
solution is indeed of the generic form above. The first condition is relevant when $X$ is essentially bounded
below, the other two when it is not: 
\begin{itemize}
\item[(i)] If $\essinf(X)$ is finite, 
$k<k_{\max}:= - \log {\Pd_0}(\{r: X( r )=\essinf(X)\}).$
\item[(ii)] 
$\thmin:=\inf \{\theta: \Lambda(\theta)<+\infty\} <  0$,
\item[(iii)] If $\thmin,\; \Lambda(\thmin)$, and $\Lambda'(\thmin)$ are all 
finite 
then $k\le\thmin \Lambda'(\thmin) - \Lambda(\thmin).$
\end{itemize}

The concepts used above are in close analogy to statistical mechanics. 
The risk factor vector $r$ is the counterpart of the phase space points.
The pricing function $X$ is the counterpart of the energy function.
$\Lambda$ is the counterpart of the logarithm of the partition function $Z$.
 $\theta$ is the counterpart of the inverse temperature parameter
 $\beta=1/kT$. The worst case distribution~\eqref{eq-wcs} is the counterpart 
 of the canonical distribution.

\section{Maximum Loss over relative entropy balls: The pathological cases}
\label{section-RelEnt-pathological}
Now let us turn to the solution of Problem~\eqref{eq-genMaxLoss} in the pathological case where $\Gamma$ is a large relative entropy ball, so that one of the conditions (i)-(iii) is violated.

First consider the case that assumption (i) above is violated, where the loss
is essentially bounded and the sphere is not ``sufficiently small''. Long bond
portfolios are examples for this case. In this case
equation~\eqref{eq-def-theta} has no negative solution. The shape of the $\Lambda$-function is displayed in Fig.~\ref{fig-pathological1}.

\begin{proposition}
\label{prop-pathological1}
If $\essinf(X)$ is finite, and $k \ge  \kmax$ (defined in (i), Section~\ref{sec-intuition})
then the solution to Problem~\eqref{eq-genMaxLoss} is $V(k)=\essinf(X)$. The worst case distribution $\Pbar$ has the $\Pd_0$-density
\begin{equation*}
\frac{d\Pbar}{d{\Pd_0}}( r ) := \left\{ \begin{array}{ll}
 1/\beta & \mbox{if $X( r )=\essinf(X)$} \\
 0  & \mbox{otherwise,}
 \end{array} 
 \right.
\end{equation*}
where $\beta= {\Pd_0}(\{r: X( r )=\essinf(X)\})$.
\end{proposition}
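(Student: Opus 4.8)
The plan is to establish $V(k)=\essinf(X)$ by sandwiching: a trivial lower bound valid for every $k$, and a matching upper bound realised by the distribution $\Pbar$ exhibited in the statement, for which the only nonobvious point is that it lies in the ball $\Gamma$.

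For the lower bound, note that every $\Pd$ admissible in \eqref{eq-genMaxLoss} has $I(\Pd\,||\,{\Pd_0})\le k<+\infty$, hence $\Pd\ll{\Pd_0}$, since $I(\Pd\,||\,{\Pd_0})=+\infty$ whenever $\Pd\not\ll{\Pd_0}$. As $X\ge\essinf(X)$ holds ${\Pd_0}$-a.e., it also holds $\Pd$-a.e.; since $\essinf(X)$ is finite, $X^-$ is then bounded $\Pd$-a.e., so $E_\Pd(X)$ is well defined in $(-\infty,+\infty]$ and $E_\Pd(X)\ge\essinf(X)$. Therefore $V(k)\ge\essinf(X)$, with no restriction on $k$.

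For the upper bound, set $A:=\{r:X(r)=\essinf(X)\}$ and $\beta:={\Pd_0}(A)$, and observe that $\beta>0$ under the hypothesis: otherwise $\kmax=-\log\beta=+\infty$, and $k\ge\kmax$ cannot hold for finite $k$. Hence $\Pbar$ with $\Pd_0$-density $\beta^{-1}\mathbf{1}_A$ is a genuine probability measure, concentrated on $A$, where $X\equiv\essinf(X)$; thus $E_{\Pbar}(X)=\essinf(X)$. It remains to check $\Pbar\in\Gamma$. Taking $\mu={\Pd_0}$ in the definition of relative entropy, the terms $-p+p_0$ integrate to $-1+1=0$, leaving
\begin{equation*}
I(\Pbar\,||\,{\Pd_0})=\int_A\frac{1}{\beta}\log\frac{1}{\beta}\,d{\Pd_0}=\log\frac{1}{\beta}=\kmax\le k ,
\end{equation*}
so $\Pbar$ is admissible and $V(k)\le E_{\Pbar}(X)=\essinf(X)$. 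With the lower bound this gives $V(k)=\essinf(X)$ and shows $\Pbar$ attains the infimum; the same computation exhibits $-\log{\Pd_0}(A)$ as the least relative entropy from ${\Pd_0}$ among distributions concentrated on $A$, which is the intuitive reason $\kmax$ is the threshold.

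I do not expect a genuine obstacle here: the only points needing care are the integrability bookkeeping in the lower bound and the degenerate case $\beta=0$, both dealt with above. As a cross-check with Fig.~\ref{fig-pathological1}, the result also follows by letting $\theta\to-\infty$ in the exponential-family formulas of Section~\ref{sec-intuition}: because $X-\essinf(X)\ge0$ vanishes precisely on $A$, dominated convergence yields $\Lambda(\theta)=\theta\,\essinf(X)+\log\beta+o(1)$ and $\Lambda'(\theta)\to\essinf(X)$, so $\theta\Lambda'(\theta)-\Lambda(\theta)\to\kmax$ and $\Pd(\theta)\to\Pbar$; this explains why \eqref{eq-def-theta} has no negative solution once $k\ge\kmax$.
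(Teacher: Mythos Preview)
Your proof is correct and follows essentially the same sandwich argument as the paper: compute $I(\Pbar\,||\,{\Pd_0})=-\log\beta\le k$ to get the upper bound $V(k)\le\essinf(X)$, and invoke the trivial lower bound $V(k)\ge\essinf(X)$. The paper's version is terser---it simply declares the lower bound ``trivially'' and does not spell out the $\beta=0$ case or the exponential-family cross-check---but the logical structure is identical.
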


\begin{proof}
The distribution $\Pbar$ satisfies 
$$I(\Pbar \, || \,  {\Pd_0})=\int\log \frac{d\Pbar}{d{\Pd_0}} d\Pbar = -\log \beta,$$
hence $I(\Pbar \, || \,  {\Pd_0})\leq k$ if $k \geq - \log \beta.$
Then $V(k)\leq E_{\Pbar}(X)$. Trivially $V(k)\geq \essinf(X)$. The claim $V(k)=\essinf(X)$ follows.
\end{proof}

\begin{figure}[htbp]
\begin{center}
   \centering
    \includegraphics[width=5.9cm,height=5.5cm]{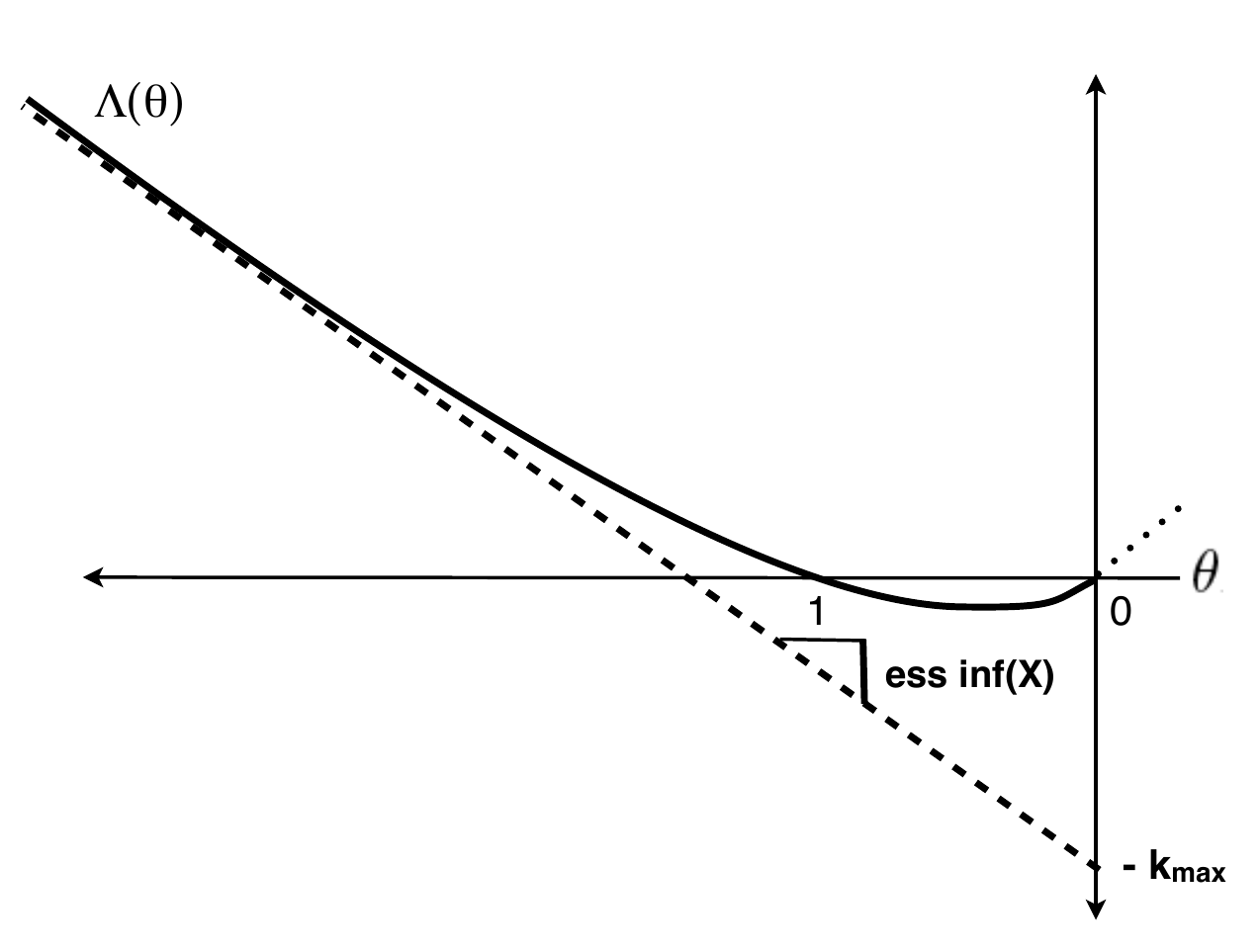}
\caption{\label{fig-pathological1}
The pathological case of Proposition~\ref{prop-pathological1}.}
\end{center}
\end{figure}

Next consider the pathogolical case that assumption (ii) above is violated so that $\thmin=0$, and thus $ \Lambda(\theta) = +\infty$ for all $\theta < 0$. 
\begin{proposition}
\label{prop-pathological2}
If $\thmin$ defined in (ii) of Section~\ref{sec-intuition} equals zero, then the solution to Problem~\eqref{eq-genMaxLoss} is $V(k)= - \infty$ for all $k>0$. 
\end{proposition}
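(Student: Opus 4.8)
The plan is to produce, for each $k>0$, a sequence of distributions that is feasible for \eqref{eq-genMaxLoss} and along which $E_{\Pd}(X)\to-\infty$; this forces $V(k)=-\infty$. The hypothesis $\thmin=0$ means $\int e^{-\eps X}\,d{\Pd_0}=+\infty$ for every $\eps>0$, i.e.\ the left tail of $X$ under ${\Pd_0}$ is heavier than any exponential. I would first draw two consequences. (a) $\essinf(X)=-\infty$: if $\essinf(X)$ were finite then $e^{\theta X}\le e^{\theta\,\essinf(X)}$ for $\theta<0$, giving $\thmin=-\infty$. Consequently $\beta_n:={\Pd_0}(\{r:X(r)\le -n\})>0$ for every $n$, and $\beta_n\downarrow 0$ since $X$ is real-valued. (b) $\liminf_{n\to\infty}(-\log\beta_n)/n=0$. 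For this, suppose to the contrary that $-\log\beta_n\ge cn$ for some $c>0$ and all large $n$; splitting $\int e^{-\eps X}\,d{\Pd_0}$ over $\{X>-N\}$ and the disjoint pieces $\{-(n+1)<X\le -n\}$, $n\ge N$, and bounding the $n$-th piece by $e^{\eps(n+1)}\beta_n\le e^{\eps}e^{(\eps-c)n}$, one gets $\int e^{-\eps X}\,d{\Pd_0}<\infty$ for $0<\eps<c$, contradicting $\thmin=0$. Hence there is a strictly increasing sequence $(n_j)$ with $(-\log\beta_{n_j})/n_j\to 0$.

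Now fix $k>0$ and let $\Qbar_j$ be ${\Pd_0}$ conditioned on $\{X\le -n_j\}$, so $d\Qbar_j/d{\Pd_0}=\beta_{n_j}^{-1}\mathbf 1_{\{X\le -n_j\}}$, whence $I(\Qbar_j\, ||\, {\Pd_0})=-\log\beta_{n_j}$ and $E_{\Qbar_j}(X)\le -n_j$. These conditioned measures leave the relative entropy ball as $j\to\infty$, so I dilute them: for $j$ large enough that $\beta_{n_j}<e^{-k}$ set $\eps_j:=k/(-\log\beta_{n_j})\in(0,1)$ and $\Pd^{(j)}:=(1-\eps_j){\Pd_0}+\eps_j\Qbar_j$. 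By convexity of $\Pd\mapsto I(\Pd\, ||\, {\Pd_0})$ (joint convexity of $(s,t)\mapsto s\log(s/t)$, or the Donsker--Varadhan formula), $I(\Pd^{(j)}\, ||\, {\Pd_0})\le(1-\eps_j)\,I({\Pd_0}\, ||\, {\Pd_0})+\eps_j\,I(\Qbar_j\, ||\, {\Pd_0})=\eps_j(-\log\beta_{n_j})=k$, so $\Pd^{(j)}$ is feasible. On the other hand, using that $X\in L^1({\Pd_0})$,
\[
E_{\Pd^{(j)}}(X)=(1-\eps_j)E_{{\Pd_0}}(X)+\eps_j E_{\Qbar_j}(X)\le |E_{{\Pd_0}}(X)|-\eps_j n_j=|E_{{\Pd_0}}(X)|-k\,\frac{n_j}{-\log\beta_{n_j}},
\]
which tends to $-\infty$ as $j\to\infty$ by the choice of $(n_j)$. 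Therefore $V(k)=-\infty$, and as $k>0$ was arbitrary the proposition follows.

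The step I expect to be the real work is (b): converting the analytic hypothesis $\thmin=0$ into the statement that $-\log\beta_n$ grows sub-linearly along a subsequence. The rest is routine; the only idea is the dilution in the second paragraph, which is what keeps the relative entropy bounded — the pure conditioned measures $\Qbar_j$ have $I(\Qbar_j\, ||\, {\Pd_0})\to\infty$ — while still letting the expected loss run off to $-\infty$, precisely because $\eps_j n_j=k\,n_j/(-\log\beta_{n_j})\to\infty$. (If one prefers, the same estimates give, for any $R>0$, feasible measures with $E_{\Pd}(X)=-R$ exactly, by choosing the mixing weight to interpolate between $E_{{\Pd_0}}(X)$ and $E_{\Qbar_j}(X)$; I would keep the cruder version above since only $V(k)=-\infty$ is asserted.)
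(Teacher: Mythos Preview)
Your proof is correct and takes a genuinely different route from the paper's.

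The paper truncates $X$ to the interval $[-n,m]$, works with the conditioned measure $\Pd_{m,n}=\Pd_0(\cdot\mid -n\le X\le m)$, and then builds Esscher transforms over $\Pd_{m,n}$. Since $X$ is bounded under $\Pd_{m,n}$, the corresponding cumulant generating function $\Lambda_{m,n}$ is finite everywhere; the hypothesis $\thmin=0$ forces $\Lambda_{m,n}(\theta)\to\infty$ as $n\to\infty$ for each fixed $\theta<0$, and a selection argument then produces a diagonal sequence $\theta_n\uparrow 0$ along which the expectations $\Lambda'_{m,n}(\theta_n)$ diverge while the relative entropies $\theta_n\Lambda'_{m,n}(\theta_n)-\Lambda_{m,n}(\theta_n)$ (to $\Pd_{m,n}$, and hence approximately to $\Pd_0$) vanish. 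Your argument bypasses the exponential-family machinery entirely: step~(b) translates $\thmin=0$ directly into the tail statement $\liminf_n(-\log\beta_n)/n=0$ for $\beta_n=\Pd_0(X\le -n)$ (the standard equivalence between exponential moments and exponential tail decay), and then you condition on the left-tail event $\{X\le -n_j\}$ and dilute with $\Pd_0$, using convexity of $I(\cdot\,||\,\Pd_0)$ to keep the entropy below $k$. What you gain is a shorter, more self-contained argument that never invokes Esscher transforms; what the paper's construction buys is that the approximating distributions stay within the exponential-family framework that organises the rest of the article.
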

\begin{proof}
Let $\beta_{m,n}:={\Pd_0}(\{r:-n\leq X( r )\leq m\})$ and consider the measures ${\Pd}_{m,n} \ll {\Pd_0}$ with
\begin{equation*}
\frac{d{\Pd}_{m,n}}{d{\Pd_0}}( r ) := \left\{ \begin{array}{ll}
 1/\beta_{m,n} & \mbox{if $-n\leq X( r )\leq m$} \\
 0  & \mbox{otherwise.}
 \end{array} 
 \right.
\end{equation*}
Obviously $I(\Pd_{m,n} || \Pd_0) = - \log\beta_{m,n}.$
For any $\Pd \ll {\Pd}_{m,n}$,
\begin{equation*}
I(\Pd\, || \, {\Pd_0})=\int \log(\frac{d\Pd}{d{\Pd}_{m,n}}\frac{d{\Pd}_{m,n}}{d{\Pd_0}})d\Pd = I(\Pd\, || \, {\Pd}_{m,n})-\log \beta_{m,n}
\end{equation*}
is arbitrarily close to $I(\Pd\, || \, {\Pd}_{m,n})$ if $m$ and $n$ are sufficiently large. Hence to prove that
$V(k)= - \infty$ for all $k>0$, it suffices to find to any given $m$ and sufficiently large $n$ distributions $\Pd \ll {\Pd}_{m,n}$ with $I(\Pd\, || \, {\Pd}_{m,n})$ arbitrarily close to zero and $E_\Pd(X)$ arbitrarily low.

In the rest of this proof, $m$ is fixed and $n$ will go to $+\infty$. Define $\Pd$ and $\Lambda_{m,n}$ by 
$$\frac{d\Pd}{d{\Pd}_{m,n}}( r ):=\frac{e^{\theta X( r )}}{\int e^{\theta X( r )} d{\Pd}_{m,n}( r )}=:e^{\theta X( r )-\Lambda_{m,n}(\theta)}$$
for any $\theta < 0$.
$\Pd$ and $\Lambda_{m,n}$ depend on $\theta$. 
As in \eqref{eq-Lambda'}, $E_\Pd(X)=\Lambda'_{m,n}(\theta)$ and $I(\Pd\|{\Pd}_{m,n})=\theta\Lambda'_{m,n}(\theta)-\Lambda_{m,n}(\theta)$ for any $\theta < 0$.
For each $\theta$, 
\begin{equation}\label{eq-Lambda_mn}
-\theta m\geq \Lambda_{m,n}(\theta)=\int_\theta^0 \Lambda'_{m,n}(\xi)d\xi\geq - \theta\Lambda'_{m,n}(\theta), 
\end{equation}
since $\Lambda'_{m,n}$ is increasing.
For fixed $\theta<0$, $\Lambda_{m,n}(\theta) \rightarrow \infty$ as $n\rightarrow \infty$ since $\Lambda(\theta)= \infty$ by assumption.
By \eqref{eq-Lambda_mn} it follows that $\Lambda'_{m,n}(\theta)\rightarrow - \infty$ as $n\rightarrow \infty$, and hence there exists a sequence $\theta_n\uparrow 0$ such that $\Lambda'_{m,n}(\theta_n)\rightarrow - \infty$ and $\theta_n \Lambda'_{m,n}(\theta_n)\rightarrow 0$ as $n\rightarrow\infty$. By inequality~\eqref{eq-Lambda_mn}, this implies $|\Lambda_{m,n}(\theta_n)|\rightarrow 0$ and hence $I(\Pd\|{\Pd}_{m,n})\rightarrow 0$ as $n\rightarrow \infty$.
This completes the proof that, for $\Pd$ defined with $\theta=\theta_n$, $E_\Pd (X)$ will be arbitrarily low and  $I(\Pd\, || \, {\Pd}_{m,n})$ arbitrarily small but positive.
\end{proof}

Finally consider the case that both $\Lambda(\thmin)$ and 
$\Lambda '(\thmin)$ are finite, but the sphere is not ``sufficiently small''. The shape of the $\Lambda$-function is displayed in Fig.~\ref{fig-pathological3}.

\begin{proposition}
\label{prop-pathological3}
If $-\infty < \thmin < 0$, ($\thmin$ is defined in (ii) of Section~\ref{sec-intuition}), and both $\Lambda(\thmin)$ and 
$\Lambda '(\thmin)$ are finite, and additionally 
$k>\thmin \Lambda'(\thmin) - \Lambda(\thmin)$,
then 
\begin{equation}\label{eq-MK-thmax}
V(k) = (k + \Lambda(\thmin))/\thmin,
\end{equation}
but there is no distribution achieving this value; the infimum in Problem~\eqref{eq-genMaxLoss} is not a minimum. 
\end{proposition}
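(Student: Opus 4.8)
The plan is to prove \eqref{eq-MK-thmax} by two matching one-sided bounds and then to exclude a minimiser by examining when equality can hold. Write $v:=(k+\Lambda(\thmin))/\thmin$ for the claimed value. Note first that $\thmin>-\infty$ together with $\Lambda(\thmin)<+\infty$ forces $\essinf(X)=-\infty$: if $\essinf(X)$ were finite, then $e^{\theta X}\le e^{\theta\,\essinf(X)}$ $\Pd_0$-a.e.\ for every $\theta<0$, so $\Lambda(\theta)<+\infty$ on all of $(-\infty,0)$, contradicting $\thmin>-\infty$. Hence $\Pd_0(X\le -t)>0$ for every $t>0$ and $\Pd_0(X\le -t)\to 0$ as $t\to\infty$.

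\textbf{Lower bound.} For any $\Pd$ with $I(\Pd\,||\,\Pd_0)\le k$, the density $h:=d\Pd/d\Pd_0$ obeys the pointwise Young inequality $h\cdot(\thmin X)\le h\log h-h+e^{\thmin X}$; integrating against $\Pd_0$ and using $\int e^{\thmin X}d\Pd_0=e^{\Lambda(\thmin)}<+\infty$ gives $\int X^-\,d\Pd<+\infty$, so $E_\Pd(X)$ is well defined in $(-\infty,+\infty]$. If $E_\Pd(X)=+\infty$ there is nothing to prove; if $E_\Pd(X)<+\infty$, then $X\in L^1(\Pd)$ and, writing $\log\frac{d\Pd}{d\Pd_0}=\log\frac{d\Pd}{d\Pd(\thmin)}+\thmin X-\Lambda(\thmin)$ and integrating against $\Pd$,
\begin{equation*}
k\ \ge\ I(\Pd\,||\,\Pd_0)\ =\ I(\Pd\,||\,\Pd(\thmin))+\thmin E_\Pd(X)-\Lambda(\thmin)\ \ge\ \thmin E_\Pd(X)-\Lambda(\thmin),
\end{equation*}
whence $E_\Pd(X)\ge v$ after dividing by $\thmin<0$. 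Thus $V(k)\ge v$.

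\textbf{Upper bound.} Put $a:=\thmin\Lambda'(\thmin)-\Lambda(\thmin)$, which by \eqref{eq-A2} equals $I(\Pd(\thmin)\,||\,\Pd_0)$ and is finite, and set $\rho:=k-a>0$. For $t>0$ let $A_t:=\{X\le -t\}$ and $\mu_t:=\Pd_0(\cdot\,|\,A_t)$, so $I(\mu_t\,||\,\Pd_0)=-\log\Pd_0(A_t)$ and $E_{\mu_t}(X)\le -t$. For $t$ large, form the mixture $\Pd_t:=(1-\eps_t)\Pd(\thmin)+\eps_t\mu_t$ with $\eps_t:=\rho/\bigl(-\log\Pd_0(A_t)-a\bigr)\in(0,1)$. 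By joint convexity of relative entropy and the choice of $\eps_t$, $I(\Pd_t\,||\,\Pd_0)\le(1-\eps_t)a+\eps_t\bigl(-\log\Pd_0(A_t)\bigr)=k$, so $\Pd_t$ is feasible, and by \eqref{eq-Lambda'}, $E_{\Pd_t}(X)=(1-\eps_t)\Lambda'(\thmin)+\eps_t E_{\mu_t}(X)\le\Lambda'(\thmin)-\eps_t\bigl(\Lambda'(\thmin)+t\bigr)$. Now pick $t=t_j\to\infty$ along which $-\log\Pd_0(A_{t_j})\le|\thmin|\,t_j(1+o(1))$; such $t_j$ exist since $\liminf_{t\to\infty}\frac{-\log\Pd_0(X\le -t)}{t}\le|\thmin|$ — otherwise $\Pd_0(X\le -t)\le e^{-(|\thmin|+\eta)t}$ for all large $t$ and some $\eta>0$, and a layer-cake estimate would give $\int e^{\theta X}d\Pd_0<+\infty$ for every $\theta$ with $|\thmin|<|\theta|<|\thmin|+\eta$, contradicting the definition of $\thmin$. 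Along such $t_j$, $\eps_{t_j}=\rho/\bigl(|\thmin|\,t_j(1+o(1))\bigr)$, hence $\eps_{t_j}\bigl(\Lambda'(\thmin)+t_j\bigr)\to\rho/|\thmin|$ and $\limsup_j E_{\Pd_{t_j}}(X)\le\Lambda'(\thmin)-\rho/|\thmin|=v$. With the lower bound this yields $V(k)=v$ (and $E_{\Pd_{t_j}}(X)\to v$, as each $\Pd_{t_j}$ is feasible).

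\textbf{Non-attainment and main obstacle.} If a feasible $\Pd$ had $E_\Pd(X)=V(k)=v$ (finite), the inequalities in the lower-bound step would be equalities, which forces $I(\Pd\,||\,\Pd(\thmin))=0$, i.e.\ $\Pd=\Pd(\thmin)$, together with $I(\Pd\,||\,\Pd_0)=k$; but $I(\Pd(\thmin)\,||\,\Pd_0)=a<k$ by hypothesis, a contradiction, so the infimum in \eqref{eq-genMaxLoss} is not a minimum. The only genuinely delicate point is the upper bound: conditioning on $\{X\le -t\}$ costs relative entropy $-\log\Pd_0(X\le -t)$, which is $\sim|\thmin|t$, so the mixing weight must be of order $1/t$, and to land on the \emph{exact} value $v$ one is forced to select radii $t_j$ along which this logarithmic tail probability is asymptotically $|\thmin|\,t_j$ — the availability of such a subsequence is precisely what the hypotheses on $\thmin$ guarantee.
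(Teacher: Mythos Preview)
Your proof is correct. The lower bound and the non-attainment argument coincide with the paper's: both rest on the identity $I(\Pd\,||\,\Pd_0)=I(\Pd\,||\,\Pd(\thmin))+\thmin E_\Pd(X)-\Lambda(\thmin)$ (the paper's \eqref{eq-id}), and both conclude non-attainment from the fact that equality would force $\Pd=\Pd(\thmin)$, which is incompatible with the hypothesis $k>a$.

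The upper bound, however, follows a genuinely different route. The paper reduces this step to Proposition~\ref{prop-pathological2}: it observes that the shifted cumulant function $\Labar(\theta)=\Lambda(\theta+\thmin)-\Lambda(\thmin)$ is $+\infty$ for all $\theta<0$, invokes Proposition~\ref{prop-pathological2} with $\Pd(\thmin)$ in the role of $\Pd_0$ to obtain distributions $\Pd'$ with $I(\Pd'\,||\,\Pd(\thmin))$ arbitrarily small and $E_{\Pd'}(X)$ arbitrarily low, and then mixes $\Pd'$ with $\Pd(\thmin)$ and reads off the bound via \eqref{eq-id}. Your construction is self-contained: you mix $\Pd(\thmin)$ directly with the conditional laws $\mu_t=\Pd_0(\,\cdot\,|X\le -t)$, control $I(\Pd_t\,||\,\Pd_0)$ by convexity, and hit the target value $v$ along a subsequence $t_j$ chosen so that $-\log\Pd_0(X\le -t_j)\le |\thmin|\,t_j(1+o(1))$. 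The existence of such $t_j$ is secured by the converse-Chernoff observation $\liminf_{t\to\infty}\frac{-\log\Pd_0(X\le -t)}{t}\le |\thmin|$, which you correctly derive from the very definition of $\thmin$ via a layer-cake estimate. The paper's approach is more modular (it recycles Proposition~\ref{prop-pathological2} and avoids any explicit tail analysis), while yours is more direct and makes the role of the tail of $X$ under $\Pd_0$ transparent; in particular, your argument does not need the truncation machinery $\Lambda_{m,n}$ that underlies Proposition~\ref{prop-pathological2}.
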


\begin{figure}[htbp]
\begin{center}
   \centering
\includegraphics[width=10.9cm,height=7.5cm]{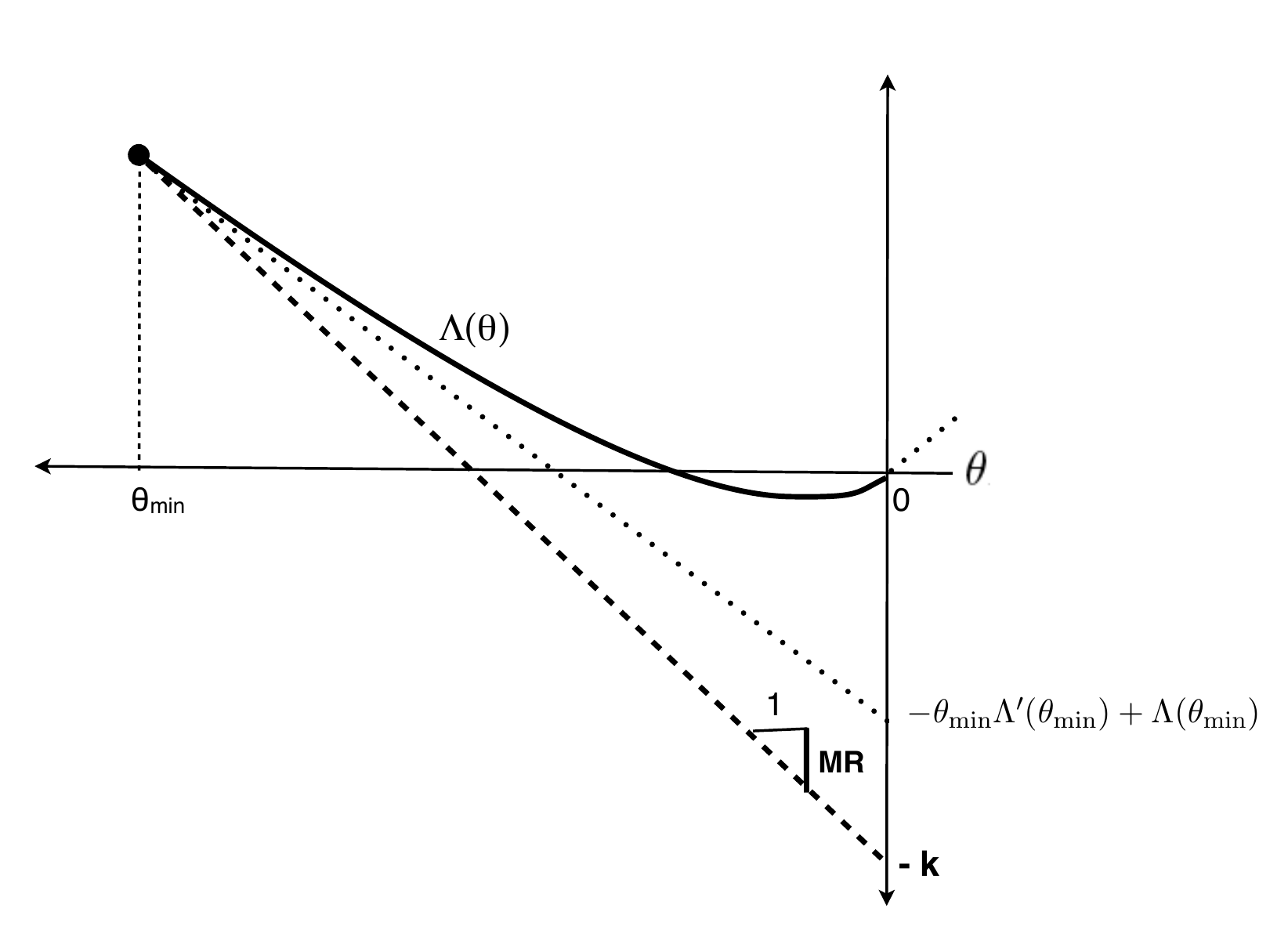}
\caption{\label{fig-pathological3}
The pathological case of Proposition \ref{prop-pathological3}.}
\end{center}
\end{figure}

\begin{proof}

Define $\Pd(\thmin)$ as in \eqref{eq-wcs} with $\thmin$ in the place of  
$\theta$. Then 
\begin{eqnarray}
I(\Pd\| {\Pd_0}) & = & \int \log\left(\frac{d\Pd}{d\Pd(\thmin)}\frac{d\Pd(\thmin)}{d{\Pd_0}} \right) d\Pd \nonumber \\
& = & I(\Pd\|\Pd(\thmin)) + \int\log( \exp( \thmin X( r ) - \Lambda(\thmin))d\Pd( r ) \nonumber \\
& = & I(\Pd\|\Pd(\thmin)) + \thmin E_\Pd(X) -  \Lambda(\thmin) \label{eq-id}
\end{eqnarray}
for all $\Pd\ll \Pd(\thmin)$. Hence, if $I(\Pd \, || \,  {\Pd_0})\leq k$ 
then (using $\thmin<0$)
\begin{equation}
\label{eq-ub}
E_\Pd(X) \geq (k + \Lambda(\thmin) - I(\Pd\|\Pd(\thmin)))/\thmin,
\end{equation}
proving that $V(k)\geq (k + \Lambda(\thmin))/\thmin.$
To show that equality holds, apply the result of Proposition~\ref{prop-pathological2} to $\Pd(\thmin)$ in the role of ${\Pd_0}$, then the role of $\Lambda(\theta)$ is played by
$$\Labar(\theta):=\log\int e^{\theta X( r )}d\Pd(\thmin)( r ) = \Lambda(\theta
+ \thmin) - \Lambda(\thmin).$$ Clearly, $\Labar(\theta)=\infty$ for all
$\theta < 0$, hence by Proposition~\ref{prop-pathological2} there exist
distributions $\Pd'$ with $I(\Pd'\|\Pd(\thmin))$ arbitrarily small but
positive and $E_{\Pd'}(X)$ arbitrarily low. Then, for any small $\epsilon>0$,
a suitable linear combination $\Pd$ of $\Pd'$ and $\Pd(\thmin)$ satisfies
$E_\Pd(X)=(k+\Lambda(\thmin)-\epsilon)/\thmin$ and
$I(\Pd\|\Pd(\thmin))<\epsilon$. For this $\Pd$, eq. \eqref{eq-id} implies that
$I(\Pd\, || \, {\Pd_0})\leq k$ and the claim $V(k)\leq (k +
\Lambda(\thmin))/\thmin$ follows. This proves that $V(k) = (k + \Lambda(\thmin))/\thmin$.

\eqref{eq-ub} implies that in Problem~\eqref{eq-genMaxLoss} the supremum is not attained because $I(\Pd \, || \, \Pd(\thmin))$ is strictly positive when $E_\Pd(X) < \Lambda'(\thmin) = E_{\Pd(\thmin)}(X)$.  \end{proof}

\begin{remark} \label{r-Lambda*} 
Consider the convex conjugate of $\Lambda$ defined by
\begin{equation}
\label{eq-def-Lambda*}
\Lambda^*(x):=\sup_{\theta}(\theta x - \Lambda(\theta)),
\end{equation}
which is a convex, lower semicontinuous function on $\R$. Clearly, 
\begin{equation}
\label{eq-Lambda*}
\Lambda^*(x) = \theta x -  \Lambda(\theta) \:\:\:\: \mbox{if}\:\: \:\: x=\Lambda'(\theta).
\end{equation}
However, for some $x$ perhaps no $\theta$ satisfies $x=\Lambda '(\theta)$.
In the generic case, when the assumptions (i)-(iii) of
Section~\ref{sec-intuition} are met, the optimal value attained in
Problem~\eqref{eq-Delbaenrep} is equal to $x=\Lambda '(\thbar)$ for $\thbar$
satisfying \eqref{eq-def-theta}, which $x$ is the unique solution of 
\begin{equation}\label{eq-Lambda*-k}
\Lambda^*(x)=k \:\: \mbox{and} \:\: x < E_{\Pd_0}(X).
\end{equation} 
The proof of Proposition~\ref{prop-pathological3} establishes that $V(k)$ always equals the solution of \eqref{eq-Lambda*-k} when it exists, even if \eqref{eq-def-theta} does not have a solution.
\end{remark}
                                    

\section{A more general framework}
\label{sec-CIF}
Now we construct a unified framework that covers the choices of $\Gamma$ in
\eqref{eq-Gamma} when $D$ is an $f$-divergence or a Bregman distance, as well
as others. In this framework, $\Gamma$ is chosen as a set of 
probability measures $P\ll\mu$ (where $\mu$ is a given measure on $\Omega$, 
finite or $\sigma$-finite) of the form 
\begin{equation}\label{hbek}
\Gamma = \{\Pd\ll \mu:\; p=d\Pd/d\mu\;\; \text{satisfies}\;\;H( p )\leq k \},
\end{equation}
where $H$ is a convex integral functional defined as 
\begin{equation}
\label{H}
H( p ) :=\int_\Omega \beta(r, p( r ))\mu(dr),
\end{equation}
for measurable, non-negative functions $p$ on $\Omega$. Here
$\beta: \Omega\times (0, +\infty)\rightarrow \R$ is a mapping such that 
$\beta(r, s)$ is a measurable function of $r$ for each $s\in (0, +\infty )$ 
and a strictly convex function of $s$ for each $r\in\Omega$. The definition 
of $\beta$ is extended to $s\leq 0$ by
\begin{equation}
\label{be0}
\beta(r, 0):=\lim_{s\downarrow 0}\beta(r, s), \:\: \:\: \beta(r, s):=+\infty \:\mbox{if $s<0$}.
\end{equation}
No differentiability assumptions are made about $\beta$ but the convenient 
notations $\beta'(r, 0)$ and $\beta'(r, +\infty)$ will be used for the common 
limits of the left and right derivatives of $\beta(r, s)$ by $s$ as 
$s\downarrow 0$ resp. $s\uparrow +\infty$. Note that 
\begin{equation}
\label{bei}
\beta'(r, +\infty) = \lim_{s\uparrow +\infty} \frac{\beta(r, s)}{s}.
\end{equation}
With the understandings \eqref{be0}, the mapping $\beta: \Omega\times
\R\rightarrow (-\infty, +\infty]$ is a convex normal integrand in the sense of
\cite{RockafellarWets1997}, which ensures the
measurability\footnote{Measurability issues will not be entered below. For the
  measurability of functions we deal with, see references in
  \cite{CsiszarMatus2012art} to the book of \cite{RockafellarWets1997}.} of
the function $\beta(r, p( r ))$ in \eqref{H} 
and of similar functions later on, as in \eqref{K} and 
\eqref{pte}.

Depending on the choice of $\beta$, $H( p )$ will be relative entropy to $\Pd_0$, some Bregman distance, some $f$-divergence, or some other divergence, as in Section~\ref{sec-applications} below. 
Our general assumption about the relation of $\beta$ and the best guess distribution $\Pd_0$, always satisfied
in the above cases, will be that the minimum of $H(p)$ among probability
densities $p$ is attained for $p_0$, the density of $\Pd_0$;
without any loss of generality, this minimum is supposed to be $0$, thus
\begin{equation}\label{Hnull}
H(p)\ge H(p_0)=0 \:\:\text{whenever}\:\:\int p d\mu=1.
\end{equation}
In addition, we assume that $E_{\Pd_0}(X)= \int X p_0 d\mu$
exists and 
 \begin{equation}\label{Hexp}
m := \mbox{$\mu$-ess inf}(X) \: <b_0 := E_{\Pd_0}(X)   <\:M := \mbox{$\mu$-ess sup}(X). 
\end{equation}

\paragraph{Relation of the model risk problem and the moment problem}
The distribution model risk \eqref{eq-Delbaenrep} with $\Gamma$ as in \eqref {hbek} is evaluated by solving the worst case problem
\begin{equation}
\label{V}
\inf_{p:\int p d\mu =1, H( p ) \leq k} \int X p d\mu =: V( k )
\end{equation}
and then taking $\MaxLoss=-V(k)$. Our goal is to determine $V(k)$, and also the minimiser (the density of the worst case scenario in $\Gamma$), if $V(k)$ is finite and the minimum in \eqref{V} is attained. If this minimiser exists, it is unique, by strict convexity of $\beta$. 

Problem~\eqref{V} is related to the the moment problem
\begin{equation}\label{F}
\inf_{p:\int p d\mu =1, \int X p d\mu=b} H( p ) =: F(b)
\end{equation} 
in analogy to the relation between problem \eqref{eq-genMaxLoss} and the maximum entropy problem \eqref{eq-MaxEnt} described in Section~\ref{sec-intuition}. Denote
$$\kmax:=\lim_{b\downarrow m }F(b).$$

\begin{proposition}\label{simple}
Supposing 
\begin{equation}\label{k<}
0<k<k_{\max},
\end{equation} 
there exists a unique $b$ with $m<b< b_0$ and
\begin{equation}\label{b<}
F(b)=k,
\end{equation} 
and then the solution to problem \eqref{V} has the value 
\begin{equation}
\label{Vb}
V(k)=b.
\end{equation}
The minimum in \eqref{V} is attained if and only if that in 
\eqref{F} is attained (for this $b$), in which case the same $p$ attains 
both minima.
\end{proposition}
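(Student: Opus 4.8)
The plan is to exploit the convexity and monotonicity properties of the value function $F(b)$ of the moment problem~\eqref{F}, together with a sandwiching argument relating the two optimization problems. First I would record the basic structural facts about $F$: it is a convex function of $b$ on $[m,M]$ (as the infimal value of a jointly convex problem over a convex constraint set, linear in $b$), it satisfies $F(b_0)=0$ by \eqref{Hnull}, and it is therefore nonincreasing on $[m,b_0]$ and nondecreasing on $[b_0,M]$. Since we are looking at the \emph{infimum} of $\int Xp\,d\mu$, only the left branch $b\in[m,b_0]$ matters. On that branch $F$ decreases from $\kmax=\lim_{b\downarrow m}F(b)$ down to $F(b_0)=0$, and by convexity it is continuous on the open interval $(m,b_0)$; hence under the hypothesis $0<k<\kmax$ there is a point $b\in(m,b_0)$ with $F(b)=k$. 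Strict convexity of $\beta$ forces strict monotonicity of $F$ on the portion where the constraint $\int Xp\,d\mu=b$ genuinely binds, which gives uniqueness of this $b$; I would spell this out by noting that if $F$ were constant on a subinterval, a strict convex combination of two distinct minimisers would contradict strict convexity of $H$.

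Next I would prove $V(k)=b$ by two inequalities. For $V(k)\le b$: any $p$ that attains (or nearly attains) the infimum in \eqref{F} for this $b$ is feasible for \eqref{V} since $H(p)=F(b)=k\le k$ and $\int Xp\,d\mu=b$, so $V(k)\le b$. For $V(k)\ge b$: take any $p$ feasible for \eqref{V}, so $\int p\,d\mu=1$, $H(p)\le k$, and let $b':=\int Xp\,d\mu$. Then $F(b')\le H(p)\le k=F(b)$. If $b'\ge b_0$ we are done since $b'\ge b_0>b$. If $b'<b_0$, then $F(b')\le F(b)$ together with $F$ strictly decreasing on $[m,b_0)$ (the strict monotonicity established above) forces $b'\ge b$. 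Taking the infimum over feasible $p$ gives $V(k)\ge b$. Combining, $V(k)=b$.

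Finally, the statement about attainment: if the minimum in \eqref{F} is attained at some $p^\ast$ for this $b$, then as noted $p^\ast$ is feasible in \eqref{V} with objective value $b=V(k)$, so it attains the minimum in \eqref{V}. Conversely, if $p^\ast$ attains the minimum in \eqref{V}, then $\int Xp^\ast\,d\mu=V(k)=b$ and $H(p^\ast)\le k$; but $H(p^\ast)\ge F(b)=k$ by definition of $F(b)$, so $H(p^\ast)=k=F(b)$, meaning $p^\ast$ attains the minimum in \eqref{F}. Uniqueness of the common minimiser (when it exists) follows from strict convexity of $\beta$, as already remarked before the proposition.

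The main obstacle I anticipate is establishing the continuity and strict monotonicity of $F$ on $(m,b_0)$ cleanly: convexity gives continuity on the open interval for free, but one must be careful that $F$ may jump at the endpoint $b=m$ (indeed $\kmax$ is defined as a limit, not a value) and that the ``strictly decreasing'' claim needs the constraint to be active, which is where strict convexity of $\beta$ enters in a slightly delicate way. Everything else is a routine two-sided sandwich. One should also check that $F(b)<\infty$ for $b$ near $b_0$ so that the level $k<\kmax$ is genuinely reached — this is immediate since $F(b_0)=0$ and $F$ is finite and convex, hence locally bounded, on a neighbourhood of $b_0$ in $(m,M)$ by \eqref{Hexp}.
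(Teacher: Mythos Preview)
Your overall strategy matches the paper's: use convexity of $F$ with $F(b_0)=0$ to locate a unique $b\in(m,b_0)$ with $F(b)=k$, then sandwich $V(k)$ between $b$ and $b$. But there are two genuine gaps.

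\textbf{The lower bound $V(k)\ge b$.} You write ``let $b':=\int Xp\,d\mu$'' for an arbitrary feasible $p$, but you never check that this integral is finite. The paper allows $m=-\infty$ and assumes only $\Pd_0$-integrability of $X$, so a priori there could exist $p$ with $\int p\,d\mu=1$, $H(p)\le k$, and $\int Xp\,d\mu=-\infty$, which would give $V(k)=-\infty$ and kill the proposition. Ruling this out is not free: the paper does it by a forward reference to Corollary~\ref{cor-2} of Theorem~\ref{second}, which shows (using the Fenchel inequality and the condition $\kmax>0$) that $H(p)<\infty$ forces $\int Xp\,d\mu>-\infty$. You need to either invoke that result or supply an equivalent argument.

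\textbf{The upper bound $V(k)\le b$.} Your phrase ``any $p$ that attains (or nearly attains) the infimum in \eqref{F} for this $b$ is feasible for \eqref{V}'' is not quite right in the ``nearly attains'' case: a near-minimiser at level $b$ has $H(p)$ close to $k$ but possibly strictly larger, hence infeasible. The paper's fix is cleaner: for each $t\in(b,b_0)$ one has $F(t)<k$ strictly, so there exist $p$ with $\int Xp\,d\mu=t$ and $H(p)<k$; these are feasible and give $V(k)\le t$ for all such $t$, hence $V(k)\le b$.

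A minor point: your uniqueness argument via strict convexity of $\beta$ presupposes minimisers exist in \eqref{F}. In fact uniqueness (and the strict decrease of $F$ on the set where $F>0$, $b<b_0$) follows from convexity of $F$ alone together with $F(b_0)=0<k$: if $F(b_1)=F(b_2)=k$ with $b_1<b_2<b_0$, writing $b_2$ as a convex combination of $b_1$ and $b_0$ gives $F(b_2)\le\lambda k<k$, a contradiction. This is why the paper calls it ``trivial''.
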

\begin{proof}
As the convex function $F$ attains its minimum $0$ at $b_0$, 
the assumption~\eqref{k<} trivially implies the existence of a unique $b$ 
satisfying \eqref{b<}. Moreover, 
then each $t \in (b, b_0)$ satisfies $F(t)< k$, hence there exist functions 
$p$ with $\int p d\mu =1$, $\int X p d\mu = t$ such that $F(t)> k$. This 
proves that $V(k) \leq b$. On the other hand, $F(t) > k$ if $t\in(m,b)$ 
(hence also $F(m) > k$ if $m$ is finite),
which means that the conditions $\int p d\mu =1$ and  $\int X p d\mu = t$ 
imply $H( p ) \geq F(t) > k$ for each $t \in (- \infty, b)$. 
Since $\int X p d\mu > -\infty$ if $H( p ) <\infty$, as verified later (Corollary~\ref{cor-2} of
Theorem \ref{second}), this proves that $V(k)\geq t$. The last assertion of the Proposition follows 
obviously. 
\end{proof}

\begin{remark} \label{r-kmax}
The condition \eqref{k<} in Proposition~\ref{simple} covers all interesting 
values of $k$. Indeed, one easily sees that if $k> \kmax$ or $k\ge
\kmax>0$ then $V(k)=m$, while clearly $V(0)=b_0$. This also means that
the functional $H$ can be suitable for assigning model risk only if $\kmax>0$. 
A necessary and sufficient condition for $\kmax>0$, analogous 
to condition (ii) in Section~\ref{sec-intuition}, will be given
in Corollary~\ref{cor-1} of Theorem~\ref{second}. Note that if $m=-\infty$ then
$\kmax>0$ implies $\kmax=\infty$, in which case each $k>0$ meets
condition \eqref{k<}. 
\end{remark}

For technical reasons, it will be convenient to
regard $F(b)$ as the instance $a=1$ of the function
\begin{equation}
\label{J}
J(a, b) := \inf_{p:\int p d\mu =a, \int X p d\mu =b} H( p ), \:\:\:\: (a, b)\in\R^2.
\end{equation}
Problem~\eqref{J} is a special case of minimising convex integral 
functionals under moment constraints, which has an extensive literature. For 
references, see the recent work of \cite{CsiszarMatus2012art}, relied upon 
here also for results that date back much earlier, perhaps under less general 
conditions. 
 The results in \cite{CsiszarMatus2012art} will be used (without further
mentioning this) with the choice $\phi:r\rightarrow(1, X( r ))$ of
the moment mapping  when the ``value function'' there
reduces to the function $J$ here. 
Many results in that reference need a condition called dual 
constraint qualification which, however, always holds in the current setting, 
namely, the set $\Theta$ defined in \eqref{Te} is non-empty (see the passage 
following \eqref{Te}).

The role of the function $\Lambda$ in Section~\ref{sec-intuition} will be played 
by the function 
\begin{equation}
\label{K}
K(\theta_1, \theta_2) := \int \beta^*(r, \theta_1 + \theta_2X( r ))\mu(dr), \: \: \: (\theta_1, \theta_2) \in \R^2,
\end{equation}
where $\beta^*$ is the convex conjugate of $\beta$ with respect to the second
variable,
\begin{equation}\label{betastar}
\beta^*(r, x):=\sup_{s\in \R}\left(xs - \beta(r,s)\right), \: \: \: x\in \R.
\end{equation} 
The properties of $\beta$ imply that $\beta^*(r, x)$ is 
a convex function of $x$ which is finite, non-decreasing, and 
differentiable in the interval $(-\infty, \beta'(r, +\infty))$, see 
\eqref{bei}.
At $x=\beta'(r, +\infty)$, if finite, $\beta^*(r,x)$ may be finite 
or $+\infty$. The derivative $(\beta^*)'(r, x)$ equals zero for 
$x \le\beta'(r, 0)$, is positive for $\beta'(r, 0)< x < \beta'(r, +\infty)$, and grows to $+\infty$ as 
$x\uparrow \beta'(r, +\infty)$. 

The following functions on $\Omega$ will play the role of the 
exponential family, but are parametrised by two variables and need not integrate to $1$:
\begin{equation}
\label{pte}
p_\theta( r ):= (\beta^*)'(r, \theta_1 + \theta_2 X( r )), \:\:\:\:\: \theta=(\theta_1, \theta_2)\in\Theta
\end{equation}
where\footnote{The definition~\eqref{Te} makes sure that the derivative in \eqref{pte} exists for $\mu$-a.e. $r\in\Omega$ if $(\theta_1, \theta_2)\in\Theta$. For all other $r\in\Omega$, if any, one may set by definition $p_{\theta_1, \theta_2}=0$.} 
\begin{equation}
\label{Te}
\Theta:=\left\{ \theta: K(\theta_1, \theta_2)< + \infty, \; \; \theta_1 + \theta_2 X( r ) < \beta'(r, +\infty) \: \mu\mbox{-a.e.}\right\}.
\end{equation}
The properties of $\beta^*$ stated above imply for any
$(\theta_1, \theta_2)$ in the effective domain $\dom\, K:=\{(\theta_1,
\theta_2):K(\theta_1, \theta_2)<+\infty\}$ of $K$ that $(\overline{\theta_1},
\theta_2)\in\Theta$ for each $\overline{\theta_1} < \theta_1$. In particular,
$\Theta$ contains the interior of $\dom\, K$. If $\beta'(r,+\infty)=+\infty$ $\mu$-a.e. then $\Theta=\dom\, K$.
As verified later, see Remark 4, the default density $p_0$ is equal to $p_{(\theta_0,0)}$ for some $\theta_0$ with
$(\theta_0,0)\in\Theta$. 

The function $K$ is equal to the convex conjugate of $J$:
\begin{equation}
\label{KJ}
K(\theta_1, \theta_2) = J^*(\theta_1, \theta_2) := \sup_{(a, b)\in\R^2}\left(\theta_1 a+ \theta_2 b - J(a, b)\right), \end{equation}
see \cite[Theorem 1.1]{CsiszarMatus2012art}. In particular, $K$ is a lower
semicontinuous proper\footnote{I.e., it never equals $-\infty$ and is not
identically $+\infty$.} convex function. Also, $K$ is differentiable in the
interior of $\dom\, K$, and
\begin{equation}
\label{grad}
\nabla K(\theta) = \left( \int p_{\theta} d\mu, \int X p_{\theta} d\mu\right), 
\:\:\:\theta=(\theta_1, \theta_2) \in \mbox{int\,dom} \, K,
\end{equation}
see \cite[Corollary 3.8]{CsiszarMatus2012art}.

\paragraph{Main results} We calculate $b$ satisfying~\eqref{b<}, which by \eqref{Vb} amounts to solving problem \eqref{V}, by evaluating instead of $J$ the function $K^*$, using the identity $J^*=K$ which implies (\citet[Theorem 12.2]{Rockafellar1970})
\begin{equation}
\label{JK}
J(a, b) = K^*(a, b), \:\:\:\: (a, b) \in \intdom \, J.
\end{equation}
$K^*$ is the convex conjugate of $K$,
\begin{equation}
\label{Kstar}
K^*(a, b):=\sup_{(\theta_1, \theta_2)\in\R^2} \left(\theta_1 a + \theta_2 b - K(\theta_1, \theta_2)\right),
\:\:\:\: (a, b) \in \R^2,
\end{equation}
and the interior of the effective
domain of $J$ is, by \citet[Lemma 6.6]{CsiszarMatus2012art} 
\begin{equation}\label{idoJ}
\intdom\, J =\{(a,b): a>0,\; am<b<aM\}.
\end{equation}
Proposition \ref{simple} and \eqref{JK}, \eqref{idoJ} imply for $0<k<k_{\max}$ the analogue of
Remark~\ref{r-Lambda*}: 
A (unique) $b$ satisfies
\begin{equation}\label{conjeq}
K^*(1,b)=k\quad\mbox{and}\quad b<b_0=E_{\Pd_0}(X),
\end{equation}
and then $V(k)=b$. This already provides a recipe for
computing $V(k)$. In regular cases, a more explicit solution
is available, based on the following
key result about Problem~\eqref{J}, 
see \cite[Lemma 4.4, Lemma 4.10]{CsiszarMatus2012art}:

\begin{lemma}\label{l-basic}
If $\theta=(\theta_1,\theta_2)\in\Theta$
satisfies 
\begin{equation}\label{e-moment}
\int p_{\theta}\,d\mu = a,\quad \int Xp_{\theta}\,d\mu =b
\end{equation}
then it attains the maximum in \eqref{Kstar}. Moreover,
in case $(a, b) \in \intdom\; J$, the existence of 
$\theta\in\Theta$ satisfying~\eqref{e-moment}
 is necessary and 
sufficient for the attainment of the minimum in \eqref{J}, and then 
$p=p_{\theta}$ is the (unique) minimiser.
\end{lemma}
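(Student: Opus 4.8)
The plan is a Fenchel-duality argument between the primal problem \eqref{J} and its dual \eqref{Kstar}, with the pointwise Fenchel--Young equality for the convex normal integrand $\beta$ playing the role that the relation $\Lambda'(\theta)=E_{\Pd(\theta)}(X)$ played in Section~\ref{sec-intuition}. The work splits into three pieces: attainment of the maximum in \eqref{Kstar} (which needs no interiority hypothesis), sufficiency together with uniqueness for \eqref{J}, and necessity for \eqref{J} (which is where interiority is used).

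\emph{Attainment of the maximum in \eqref{Kstar}.} Let $\theta=(\theta_1,\theta_2)\in\Theta$ satisfy \eqref{e-moment}, and put $x(r):=\theta_1+\theta_2 X(r)$. Membership in $\Theta$ forces $x(r)<\beta'(r,+\infty)$ for $\mu$-a.e.\ $r$, so $\beta^*(r,\cdot)$ is differentiable at $x(r)$ with derivative $p_\theta(r)$, and hence the Fenchel--Young inequality $\beta(r,s)+\beta^*(r,x(r))\ge x(r)s$ holds with equality at $s=p_\theta(r)$:
\begin{equation*}
\beta(r,p_\theta(r))=x(r)\,p_\theta(r)-\beta^*(r,x(r))\qquad\mu\text{-a.e.}
\end{equation*}
Integrating against $\mu$ — legitimate because $K(\theta)<\infty$, $K$ is proper so $K(\theta)>-\infty$, and the two moments in \eqref{e-moment} are finite, whence $\beta(r,p_\theta(r))$ is the a.e.\ sum of three $\mu$-integrable functions — and using \eqref{K} gives
\begin{equation*}
H(p_\theta)=\theta_1 a+\theta_2 b-K(\theta).
\end{equation*}
Since $p_\theta$ is feasible for \eqref{J} we have $H(p_\theta)\ge J(a,b)$; since $J\ge J^{**}$ always and $J^{**}=K^*$ by $J^*=K$ from \eqref{KJ}, we have $J(a,b)\ge K^*(a,b)$; and $K^*(a,b)\ge\theta_1 a+\theta_2 b-K(\theta)$ by the definition \eqref{Kstar}. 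Chaining these, $K^*(a,b)\ge H(p_\theta)\ge J(a,b)\ge K^*(a,b)$, so all four quantities coincide. In particular $\theta$ attains the supremum in \eqref{Kstar}, and, as a by-product, $H(p_\theta)=J(a,b)$, so that $p_\theta$ attains the minimum in \eqref{J}; its uniqueness follows at once from the strict convexity of $\beta(r,\cdot)$, which makes $H$ strictly convex on the affine feasible set. This settles sufficiency.

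\emph{Necessity.} Suppose now $(a,b)\in\intdom J$ and that \eqref{J} has a minimiser $\bar p$. Since $K^*\le J$ we have $\dom J\subseteq\dom K^*$, hence $\intdom J\subseteq\intdom K^*$; as $K^*=J^{**}$ is proper lsc convex, it has a nonempty subdifferential at the interior point $(a,b)$ (\citet[Theorem 23.4]{Rockafellar1970}), so we may fix $\theta\in\partial K^*(a,b)$, equivalently $(a,b)\in\partial K(\theta)$ (\citet[Theorem 23.5]{Rockafellar1970}), which means $\theta$ attains the supremum in \eqref{Kstar}. If $\theta\in\intdom K$ we are done: $K$ is differentiable there and \eqref{grad} yields $\int p_\theta\,d\mu=a$, $\int Xp_\theta\,d\mu=b$, while $\intdom K\subseteq\Theta$, so \eqref{e-moment} holds and $p_\theta$ is the unique minimiser by the first part, forcing $\bar p=p_\theta$. \textbf{The main obstacle} is the case in which this dual optimiser $\theta$ lies on the relative boundary of $\dom K$; one must show it still belongs to $\Theta$ and still satisfies \eqref{e-moment}. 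I would do this by a first-order variational argument on $\bar p$ — testing $H$ along the feasible segments $(1-t)\bar p+tq$, exploiting the linearity of the moment constraints, to conclude $\bar p(r)\in\partial\beta^*(r,\theta_1+\theta_2X(r))$ for $\mu$-a.e.\ $r$, hence $\bar p=p_\theta$ and $\theta\in\Theta$ — the delicate points being the integrability of $r\mapsto\beta^*(r,\theta_1+\theta_2X(r))$ and the behaviour on $\{r:\bar p(r)=0\}$. This is exactly the content of \cite[Lemma 4.10]{CsiszarMatus2012art}, whose proof relies on the dual constraint qualification, valid here because $\Theta\neq\emptyset$; I would invoke it rather than reproduce it.
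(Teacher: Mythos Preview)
Your argument is correct. The paper itself does not give a proof of this lemma but simply cites \cite[Lemma~4.4, Lemma~4.10]{CsiszarMatus2012art}; your Fenchel--Young computation for the sufficiency direction is exactly the content of their Lemma~4.4, and for the necessity direction (including the boundary case you flag as the main obstacle) you end up invoking their Lemma~4.10 just as the paper does, so your proposal is an unpacking of the same citation rather than a different route.
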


\begin{theorem} \label{main}
Assuming \eqref{Hnull}, \eqref{Hexp}, \eqref{k<}, if
\begin{equation}
\label{dK}
\thbar_2<0, \quad \int p_{\thbar}\,d\mu = 1, 
\quad\thbar_1 + \thbar_2 \int Xp_{\thbar}\,d\mu - K(\thbar) = k 
\end{equation} 
for some $\thbar=(\thbar_1,\thbar_2)\in\Theta$  then the value of the inf in \eqref{V} is
\begin{equation}
\label{MRk-solution}
V(k) = \int Xp_{\thbar}\,d\mu. 
\end{equation}
Essential smoothness\footnote{A lower semicontinuous proper convex function is essentially 
smooth if its effective domain has nonempty interior, the function is
differentiable there, and at non-interior points of the effective domain 
the directional derivatives in directions towards the
interior are $-\infty$. The latter trivially holds if the effective 
domain is open.} of $K$ is a sufficient condition for
the existence of such $\thbar$. Further, a necessary and sufficient condition for $p$ to attain the minimum 
in~\eqref{V} is $p=p_{\thbar}$ for the $\thbar\in\Theta$ satisfying~\eqref{dK}.
\end{theorem}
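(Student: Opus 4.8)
The plan is to translate the hypotheses \eqref{dK} into statements about the moment problem \eqref{J} and its conjugate $K$, then apply Proposition~\ref{simple}, Lemma~\ref{l-basic}, and the conjugacy identity \eqref{JK}. First I would observe that the three conditions in \eqref{dK} say precisely that $\thbar\in\Theta$ satisfies the moment equations \eqref{e-moment} with $a=1$ and $b:=\int Xp_{\thbar}\,d\mu$, and that $\thbar_1 a+\thbar_2 b - K(\thbar)=k$. By Lemma~\ref{l-basic}, the first part gives that $\thbar$ attains the supremum defining $K^*(1,b)$, so $K^*(1,b)=\thbar_1+\thbar_2 b - K(\thbar)=k$. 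To invoke \eqref{JK} I need $(1,b)\in\intdom\,J$, i.e. $m<b<M$ by \eqref{idoJ}; this follows because $p_{\thbar}$ is a genuine probability density (by the second condition in \eqref{dK}) that is not $\mu$-a.e. equal to a point mass at an essential extremum of $X$ — strict convexity of $\beta$, equivalently the fact that $(\beta^*)'(r,\cdot)$ is not constant where it matters, forces $p_{\thbar}$ to have $X$-mean strictly inside $(m,M)$. Hence $F(b)=J(1,b)=K^*(1,b)=k$, so $b$ is the unique value furnished by Proposition~\ref{simple} provided I also check $b<b_0$.

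For the sign condition, I would use $\thbar_2<0$ together with the gradient formula \eqref{grad}: since $K$ is convex and $\nabla K(\theta_0,0)=(\,\int p_0\,d\mu,\;\int Xp_0\,d\mu\,)=(1,b_0)$ while $\nabla K(\thbar)=(1,b)$, convexity along the segment from $(\theta_0,0)$ to $\thbar$ (both in $\dom\,K$, hence the interior, so $K$ is differentiable there) gives $\langle \nabla K(\thbar)-\nabla K(\theta_0,0),\,\thbar-(\theta_0,0)\rangle\ge 0$, i.e. $\thbar_2(b-b_0)\ge 0$; with $\thbar_2<0$ this yields $b\le b_0$, and strict convexity rules out equality unless $\thbar=(\theta_0,0)$, which is excluded since $\thbar_2<0$. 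So $b<b_0$. By Proposition~\ref{simple} this $b$ is exactly $V(k)$, giving \eqref{MRk-solution}. The attainment clause follows from the last sentence of Lemma~\ref{l-basic}: since $(1,b)\in\intdom\,J$, the minimum in \eqref{F} is attained, and it is attained exactly by $p=p_{\thbar}$; by the final assertion of Proposition~\ref{simple} the same $p$ attains the minimum in \eqref{V}, and it is the unique such minimiser. Uniqueness of $\thbar$ with \eqref{dK} then follows from uniqueness of the minimiser $p_{\thbar}$ together with the fact that $p_{\theta}$ determines $\theta_1+\theta_2 X$ $\mu$-a.e. on the set where the derivative is strictly increasing, combined with $X$ being non-constant.

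For the sufficiency of essential smoothness of $K$: here I would argue that $F(b)=k$ has a solution $b\in(m,b_0)$ by Proposition~\ref{simple} under \eqref{k<}, so $(1,b)\in\intdom\,J$ and $J(1,b)=K^*(1,b)=k$. By \cite[Theorem 12.2 or 23.5]{Rockafellar1970}, $K^*(1,b)$ is attained at some $\thbar$ in the subdifferential correspondence, i.e. $(1,b)\in\partial K(\thbar)$; essential smoothness of $K$ forces such a subgradient point $\thbar$ to lie in $\intdom\,K\subseteq\Theta$ with $\nabla K(\thbar)=(1,b)$, which is \eqref{e-moment} with $a=1$. That recovers the first two conditions of \eqref{dK}, and $\thbar_1+\thbar_2 b - K(\thbar)=K^*(1,b)=k$ is the third; the sign $\thbar_2<0$ comes from the same convexity/gradient monotonicity argument as above using $b<b_0$. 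The main obstacle I anticipate is the routine but slightly delicate point that $b\in(m,M)$, i.e. that the worst-case density $p_{\thbar}$ genuinely has $X$-mean in the open interval — this is where strict convexity of $\beta$ and the structure of $(\beta^*)'$ must be used carefully, and it is exactly what is needed to place $(1,b)$ in $\intdom\,J$ so that the conjugacy identity \eqref{JK} and Lemma~\ref{l-basic} apply; everything else is bookkeeping with Proposition~\ref{simple} and standard convex-conjugate facts.
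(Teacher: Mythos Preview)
Your overall strategy---Lemma~\ref{l-basic} plus Proposition~\ref{simple} plus the conjugacy identity~\eqref{JK}---is exactly the paper's. But two steps are genuinely broken as written.

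First, the claim ``both in $\dom\,K$, hence the interior, so $K$ is differentiable there'' is false: the paper only gives $\intdom\,K\subseteq\Theta$, not the reverse, so $\thbar\in\Theta$ may well lie on the boundary of $\dom\,K$ where $\nabla K(\thbar)$ need not exist, and your gradient-monotonicity argument for $b<b_0$ collapses. The paper avoids differentiability entirely: from the definition of $K^*$ one has, for every $t\in(m,M)$,
\[
F(t)=K^*(1,t)\ge\thbar_1+\thbar_2 t-K(\thbar)=F(b)+\thbar_2(t-b),
\]
so $\thbar_2$ is a subgradient of $F$ at $b$; evaluating at $t=b_0$ with $F(b_0)=0$, $F(b)=k>0$, $\thbar_2<0$ forces $b<b_0$. (Your idea can be rescued by replacing gradients with subgradients---Lemma~\ref{l-basic} gives $(1,b)\in\partial K(\thbar)$---but that is essentially the paper's inequality.)

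Second, the argument that $b\in(m,M)$ via ``strict convexity of $\beta$'' does not go through. Strict convexity of $\beta$ does not prevent $p_{\thbar}$ from being supported on $\{X=m\}$: for instance with $\Omega=\{0,1\}$, $X=\mathrm{id}$, and any strictly convex $f$ with $f(1)=0$, the choice $\thbar_1=f'(1)$, $\thbar_2\le f'(0)-f'(1)<0$ gives $p_{\thbar}=(1,0)$ and $b=m=0$; this is ruled out only because then $H(p_{\thbar})=f(0)=k_{\max}$, violating $k<k_{\max}$. The paper's argument is precisely this: once $F(b)=k$ is established, $b=m$ would give $F(m)\le k<k_{\max}$, contradicting $F(m)\ge\lim_{b'\downarrow m}F(b')=k_{\max}$ (convexity). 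So the hypothesis $k<k_{\max}$ is what does the work here, not strict convexity of $\beta$.
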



\begin{corollary}\label{r-diffeq}  
If the equations 
\begin{equation}\label{diffeq}
\frac{\partial}{\partial\theta_1} K(\theta) = 1,\quad \theta_1 + 
\theta_2 \frac{\partial}{\partial\theta_2} K(\theta) - K(\theta) = k
\end{equation} 
have a solution $\thbar=(\thbar_1, \thbar_2)\in \intdom \, K$
with $\thbar_2 < 0$ then $\thbar$ satisfies~\eqref{dK} and the solution to Problem \eqref{V} equals 
\begin{equation}
V(k) = \left. \frac{\partial K(\theta)}{\partial \theta_2} 
\right\vert_{\theta=\thbar}\,.
\end{equation}
\end{corollary}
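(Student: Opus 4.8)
The goal of Corollary~\ref{r-diffeq} is essentially to rewrite the hypotheses of Theorem~\ref{main} in the more usable language of partial derivatives of $K$, exploiting the fact that $K$ is differentiable on the interior of its effective domain. The plan is to verify that the hypotheses of the Corollary imply those of Theorem~\ref{main}, and then translate the conclusion~\eqref{MRk-solution} accordingly.

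First I would observe that $\intdom\,K\subseteq\Theta$, as noted in the paragraph following~\eqref{Te} (the properties of $\beta^*$ guarantee $(\overline{\theta_1},\theta_2)\in\Theta$ for $\overline{\theta_1}<\theta_1$, and in particular the interior of $\dom\,K$ lies in $\Theta$). Hence any solution $\thbar\in\intdom\,K$ of~\eqref{diffeq} automatically lies in $\Theta$, so $p_{\thbar}$ is well defined. Next, since $\thbar\in\intdom\,K$, formula~\eqref{grad} applies and gives
$$
\frac{\partial}{\partial\theta_1}K(\thbar)=\int p_{\thbar}\,d\mu,
\qquad
\frac{\partial}{\partial\theta_2}K(\thbar)=\int Xp_{\thbar}\,d\mu.
$$
Substituting these into~\eqref{diffeq}, the first equation becomes $\int p_{\thbar}\,d\mu=1$ and the second becomes $\thbar_1+\thbar_2\int Xp_{\thbar}\,d\mu-K(\thbar)=k$; together with the standing assumption $\thbar_2<0$, these are exactly the three conditions in~\eqref{dK}. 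Therefore Theorem~\ref{main} applies and yields $V(k)=\int Xp_{\thbar}\,d\mu$, which by~\eqref{grad} again equals $\partial K(\theta)/\partial\theta_2$ evaluated at $\thbar$. This is the claimed formula.

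The only point requiring a little care — and the step I would flag as the main (though mild) obstacle — is the bookkeeping that~\eqref{diffeq} is genuinely equivalent to~\eqref{dK} once we are at an interior point: one must be sure that differentiability of $K$ at $\thbar$ is available, which is precisely why the hypothesis demands $\thbar\in\intdom\,K$ rather than merely $\thbar\in\Theta$. Given that, the identification of the gradient components via~\eqref{grad} is immediate and the rest is substitution. I would close by remarking that nothing beyond Theorem~\ref{main}, the inclusion $\intdom\,K\subseteq\Theta$, and the gradient formula~\eqref{grad} is needed, so the proof is short.
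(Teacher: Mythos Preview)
Your proof is correct and follows essentially the same route as the paper: for $\thbar\in\intdom\,K$ the gradient formula~\eqref{grad} makes the equations~\eqref{diffeq} equivalent to those in~\eqref{dK}, and then Theorem~\ref{main} gives the result. Your explicit mention of the inclusion $\intdom\,K\subseteq\Theta$ is a worthwhile bit of bookkeeping that the paper leaves implicit.
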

The Corollary follows from the Theorem because, for $\thbar\in\intdom \, K$, the equations in~\eqref{dK} are
equivalent to those in~\eqref{diffeq}, by~\eqref{grad}. However, if $K$ is not essentially smooth, 
$\thbar\in\intdom \, K $ is not a necessary condition for  \eqref{dK}.

\begin{proof}
By Lemma~\ref{l-basic}, if $\theta=(\theta_1,\theta_2)\in\Theta$
satisfies 
\begin{equation}\label{momenteq}
\int p_{\theta}\,d\mu = 1, \int Xp_{\theta}\,d\mu =b
\end{equation}
then it attains the maximum in \eqref{Kstar}. It follows, using \eqref{JK}, that
\eqref{dK} implies for $b:=\int Xp_{\thbar}\,d\mu$, if it satisfies
$m<b<M$, that
\begin{equation}\label{F=}
F(b)=J(1,b)=\thbar_1+\thbar_2 b-K(\thbar)=k.
\end{equation}
Due to Proposition~\ref{simple}, to prove \eqref{MRk-solution}  
it remains to show that $m<b<b_0$. Clearly, $k<k_{\max}$ implies $m<b$. 
Further, \eqref{Kstar} and \eqref{F=} imply
\begin{equation}\label{F>}
F(t)=K^*(1,t)\ge \thbar_1+\thbar_2 t-K(\thbar_1,\thbar_2)=F(b)+\thbar_2(t-b),
\;\; t\in(m,M). 
\end{equation}
Since $\thbar_2<0$, this shows that $F(t)>F(b)$ if $t\in(m,b_0)$,
completing the proof of \eqref{MRk-solution}. 

Suppose next that $K$ is essentially smooth. Then to $b$ in \eqref{b<} 
there exists $\thbar\in\intdom\,K$ with
\begin{equation}\label{=nabla}
(1,b)=\nabla K(\thbar), 
\end{equation}
because $(1,b)\in \intdom\,J$ and
the gradient vectors of the essentially smooth $K$
cover $\intdom\,K^*=\intdom\,J$, see \citet[Corollary 26.4.1]{Rockafellar1970}. Clearly, \eqref{=nabla} implies that $\thbar$ attains the 
maximum in \eqref{Kstar}, hence it satisfies \eqref{F=}. 
This means by \eqref{=nabla} that
 $\thbar$ satisfies the equations in \eqref{diffeq}, equivalent to those in
\eqref{dK}.  It remains to show that
$\thbar_2<0$, but this follows from \eqref{F>} applied to $t=b_0$.

Finally, the last assertion of Theorem \ref{main} follows from 
Proposition \ref{simple} and Lemma~\ref{l-basic}.
\end{proof}

\paragraph{Conditions for $\kmax > 0$.}
In Proposition~\ref{simple} and Theorem~\ref{main} the condition $\kmax > 0$ has been assumed. In this subsection we give a 
necessary and sufficient condition for this to hold. We begin with a remark.

\begin{remark}\label{r-p0} A simpler instance of 
\cite[Lemma 4.10]{CsiszarMatus2012art} than Lemma~\ref{l-basic},
namely with the constant mapping $r\rightarrow 1$ taken
for the moment mapping $\phi$, gives the following: the 
 necessary and 
sufficient condition for $p$ to minimise $H(p)$ subject to 
$\int p\,d\mu = a$ ($a>0$) is that $p(r)=(\beta^*)'(r, \theta)$
for some $\theta\in\R$ with $\beta^*(r, \theta)$ $\mu$-integrable, and then
the minimum is equal to $a\theta-\int \beta^*(r, \theta)\,d\mu(r)$. 
This establishes the claim that the default density
$p_0$, minimising $H(p)$ subject to $\int p\,d\mu = 1$, equals
$p_{(\theta_0,0)}$ for some $\theta_0$ with $(\theta_0,0)\in\Theta$; 
this $\theta_0$ also satisfies
$\theta_0-\int \beta^*(r, \theta_0)\,d\mu(r)=H(p_0)=0$.
\end{remark}

\begin{theorem}\label{second}
Assuming \eqref{Hnull},~\eqref{Hexp}, for $b<b_0$ we have $F(b)>0$ 
\mbox{if and only if}
\begin{equation}\label{poscon}
\mbox{there exists}\:\: \theta=(\theta_1,\theta_2)\in
\dom\,K\:\:\mbox{with}\:\:
\theta_2<0.
\end{equation}
\end{theorem}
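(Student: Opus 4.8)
The statement asserts an equivalence: for some (equivalently, every) $b<b_0$ the value $F(b)$ is strictly positive if and only if $\dom K$ contains a point with negative second coordinate. My plan is to prove both implications by using the convex-duality identities already established, in particular $K=J^*$, the formula \eqref{JK} $J(a,b)=K^*(a,b)$ on $\intdom J$, the subgradient description of minimisers in Lemma~\ref{l-basic}, and the characterisation of $p_0=p_{(\theta_0,0)}$ from Remark~\ref{r-p0}. The geometric picture to keep in mind is that $F(b)=J(1,b)=K^*(1,b)$ is a one-dimensional convex function vanishing at $b_0$, and its behaviour strictly to the left of $b_0$ is governed by whether the supporting ``slopes'' $\theta_2$ available in \eqref{Kstar} can be made negative, i.e.\ by whether $\dom K$ reaches into the half-plane $\theta_2<0$.

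\textbf{($\Leftarrow$) Sufficiency.} Suppose there is $\theta=(\theta_1,\theta_2)\in\dom K$ with $\theta_2<0$. Then from \eqref{Kstar}, for every $t$,
\begin{equation*}
F(t)=K^*(1,t)\ge \theta_1+\theta_2 t-K(\theta_1,\theta_2),
\end{equation*}
an affine function of $t$ with negative slope $\theta_2$. Since $F$ is convex, nonnegative, and attains the value $0$ at $b_0$, while the affine lower bound above is strictly decreasing, for $t$ sufficiently negative the lower bound exceeds $0$; more precisely, because $F(b_0)=0$ the affine minorant forces $\theta_1+\theta_2 b_0-K(\theta)\le 0$, and hence for any $b<b_0$ we get $F(b)\ge \theta_1+\theta_2 b-K(\theta)\ge \theta_2(b-b_0)>0$. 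Thus $F(b)>0$ for every $b<b_0$. (This also re-proves, in passing, the monotone/structural facts used in Remark~\ref{r-kmax}.)

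\textbf{($\Rightarrow$) Necessity, by contraposition.} Assume \eqref{poscon} fails, i.e.\ $\dom K\subseteq\{\theta_2\ge 0\}$; I must show $F(b)=0$ for some (hence, by convexity and $F(b_0)=0$, for all) $b<b_0$. The idea is that then the convex function $F=K^*(1,\cdot)$ can have no negative slope anywhere, so it is non-increasing on $(m,b_0]$; combined with $F\ge 0$ and $F(b_0)=0$ this gives $F\equiv 0$ on $(m,b_0)$. To turn this into a rigorous argument I would argue on the level of subgradients: if $F(b)>0$ for some $b<b_0$, convexity of $F$ together with $F(b_0)=0$ yields a subgradient of $F$ at some point of $(b,b_0)$ that is strictly negative; equivalently, $(1,\cdot)\mapsto K^*(1,\cdot)$ has a negative one-sided derivative there. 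By the conjugacy $F=K^*(1,\cdot)$ and the theory of conjugate functions (\citet[\S 23, \S 26]{Rockafellar1970}), a negative subgradient $\theta_2$ of $b\mapsto K^*(1,b)$ means precisely that the supremum defining $K^*(1,b)$ ``wants'' that slope, which forces $\dom K$ to contain a point with that second coordinate $\theta_2<0$ — contradicting our assumption. Concretely: if $\partial_b K^*(1,b)\ni \theta_2<0$ then $(1,b)\in\partial K(\theta_1,\theta_2)$ for a suitable $\theta_1$ (the other coordinate of the corresponding maximiser or, in the non-attained case, a limiting direction supplied by the recession analysis of $K$), so $(\theta_1,\theta_2)\in\dom K$ with $\theta_2<0$.

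\textbf{Main obstacle.} The delicate point is the necessity direction when the supremum in \eqref{Kstar} is not attained — i.e.\ when $F(b)>0$ but no genuine $\theta\in\dom K$ supports $J$ at $(1,b)$, only an asymptotic direction does. Here I cannot simply invoke Lemma~\ref{l-basic} (which presupposes attainment); instead I expect to need the recession-function / asymptotic-slope machinery: either exhibit the negative second coordinate via a sequence $\theta^{(n)}\in\dom K$ with bounded $K(\theta^{(n)})$ and $\theta^{(n)}_2$ bounded away from $0$ below, or derive a contradiction directly from $\dom K\subseteq\{\theta_2\ge 0\}$ by showing it forces every affine minorant of $F$ on $(m,b_0)$ to be non-increasing and hence $F$ to be $0$ there. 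This is essentially the $J$-vs-$K$ analogue of the argument behind Proposition~\ref{prop-pathological2}; I would model the estimate on that proof, using truncations $\{m\le X\le n\}$ to build densities $p$ with $H(p)\to 0$ and $\int Xp\,d\mu$ pushed below $b_0$, which is the concrete mechanism by which ``no negative $\theta_2$ available'' makes $F$ collapse to $0$ left of $b_0$.
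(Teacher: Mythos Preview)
Your sufficiency argument ($\Leftarrow$) contains a sign error that breaks the proof. From $\theta_1+\theta_2 b_0-K(\theta)\le 0$ you conclude $\theta_1+\theta_2 b-K(\theta)\ge \theta_2(b-b_0)$, but in fact
\[
\theta_1+\theta_2 b-K(\theta)=\bigl[\theta_1+\theta_2 b_0-K(\theta)\bigr]+\theta_2(b-b_0),
\]
and since the bracket is $\le 0$ the sum is $\le\theta_2(b-b_0)$, not $\ge$. So the affine minorant at an \emph{arbitrary} $\theta\in\dom K$ with $\theta_2<0$ need not be positive at $b$, and nothing follows. The paper handles this direction by a different mechanism: assuming $F(b)=0$ for some $b\in(m,b_0)$, Remark~\ref{r-p0} gives that $\theta^*=(\theta_0,0)$ satisfies $\theta_0-K(\theta_0,0)=0=F(b)$, so $\theta^*$ \emph{maximises} $g(\theta)=\theta_1+\theta_2 b-K(\theta)$. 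The directional derivative of $g$ at $\theta^*$ towards any $\bar\theta\in\dom K$ is computed (via \cite[Lemma 3.6]{CsiszarMatus2012art}) to be $\bar\theta_2(b-b_0)$; if $\bar\theta_2<0$ this is strictly positive, contradicting maximality. The ingredient you are missing is precisely this anchor $\theta^*$ at which the affine minorant equals $0$ exactly; only from there does a direction with $\theta_2<0$ into $\dom K$ force the supremum strictly above $0$.

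For necessity ($\Rightarrow$) your first idea is already a complete proof and you should trust it: if $\dom K\subseteq\{\theta_2\ge 0\}$ then $K^*(1,b)=\sup_{\theta\in\dom K}[\theta_1+\theta_2 b-K(\theta)]$ is a supremum of affine functions of $b$ with nonnegative slope, hence nondecreasing, hence $F(b)=K^*(1,b)\le K^*(1,b_0)=0$ on $(m,b_0)$; combined with $F\ge 0$ this gives $F\equiv 0$ there. No subgradient detour is needed, and the ``main obstacle'' you flag simply does not arise for this route. The paper instead argues the direct implication: since $(1,b)\in\intdom\,J$ the subdifferential $\partial J(1,b)$ is nonempty, and any $\bar\theta\in\partial J(1,b)$ automatically lies in $\dom K$ because $K(\bar\theta)=\bar\theta_1+\bar\theta_2 b-J(1,b)<\infty$; then $0=F(b_0)\ge F(b)+\bar\theta_2(b_0-b)$ forces $\bar\theta_2<0$ when $F(b)>0$. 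So your worry about non-attainment is unfounded in the paper's approach as well---the subgradient of $J$ (not of $K$) is what is used, and it always exists at interior points.
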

\begin{proof} To prove the necessity of~\eqref{poscon}, we may assume
$m<b<b_0$. Then $(1,b)\in\intdom\,J$, see~\eqref{idoJ}, hence the convex
function $J$ has nonempty subgradient at $(1,b)$ ~\cite[Theorem 23.4]{Rockafellar1970}.
As $J^*=K$, if $\thbar=(\thbar_1,\thbar_2)$ belongs to that subgradient then
\begin{equation}\label{e-attain}
F(b)=J(1,b)=\thbar_1+\thbar_2 b-K(\thbar)
\end{equation}
by~\cite[Theorem 23.5]{Rockafellar1970}, 
which implies as in the proof of 
Theorem~\ref{main} that this $\thbar$ also satisfies~\eqref{F>}. In turn,
~\eqref{F>} with $t=b_0$ implies that $\thbar_2\le 0$, with the strict 
inequality if $F(b)>0$. This proves the necessity of ~\eqref{poscon}.

For sufficiency, suppose that $F(b)=0$ for some 
$b\neq b_0$, $m<b<M$. By Remark~\ref{r-p0}, then 
 $F(b)=0 =\theta_0-K(\theta_0, 0)$, hence $\theta^*:=(\theta_0, 0)$ is a
maximiser of $g(\theta):= \theta_1+\theta_2 b-K(\theta)$, see \eqref{JK},
\eqref{Kstar}. It follows that for no $\bar{\theta}\in\dom\,K$ can the 
directional derivative $g'(\theta^*;\bar{\theta}-\theta^*)$ be positive. By
\cite[Lemma 3.6, Remark 3.7]{CsiszarMatus2012art}, this directional
derivative is equal to 
$$ (\bar{\theta}_1-\theta_0)+\bar{\theta}_2 b-\int (\bar{\theta}_1-\theta_0+
\bar{\theta}_2 X)p_{\theta^*}\,d\mu=\bar{\theta}_2 (b-b_0).$$
Thus, the existence of $\bar{\theta}\in\dom\,K$ with 
$\bar{\theta}_2<0$ rules out $b<b_0$, proving the sufficiency part of the
Theorem.  
\end{proof}

\begin{corollary} \label{cor-1}
Condition~\eqref{poscon} is necessary and sufficient for $k_{\max}>0$. 
Sufficient conditions are the finiteness of $m$ or the essential smoothness 
of $K$. 
\end{corollary}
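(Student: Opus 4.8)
The plan is to deduce both assertions from Theorem~\ref{second}. Since $F$ is convex, nonnegative by~\eqref{Hnull}, and attains its minimum $0$ at $b_0$ with $m<b_0$ by~\eqref{Hexp}, it is non-increasing on $(m,b_0)$, so $\kmax=\lim_{b\downarrow m}F(b)=\sup_{m<b<b_0}F(b)$, a supremum over a non-empty set. If~\eqref{poscon} holds, Theorem~\ref{second} gives $F(b)>0$ for every $b\in(m,b_0)$, whence $\kmax\ge F(b)>0$; if~\eqref{poscon} fails, the same theorem gives $F(b)\le 0$, hence (again by~\eqref{Hnull}) $F(b)=0$, for all $b\in(m,b_0)$, so $\kmax=0$. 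This proves that~\eqref{poscon} is equivalent to $\kmax>0$, and it remains to show that finiteness of $m$, and essential smoothness of $K$, each imply~\eqref{poscon}.

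For finite $m$ I would exhibit a point of $\dom K$ with negative second coordinate by a ``tilt and sandwich'' estimate. By Remark~\ref{r-p0}, $(\theta_0,0)\in\Theta\subseteq\dom K$, so $\int\beta^*(r,\theta_0)\,d\mu<\infty$. Fix any $\theta_2<0$ and set $\theta_1:=\theta_0-\theta_2 m$. Since $X(r)\ge m$ $\mu$-a.e.\ and $\theta_2<0$, we have $\theta_1+\theta_2 X(r)\le\theta_0$ $\mu$-a.e., so by monotonicity of $\beta^*$ in its second argument the positive part of the integrand of $K(\theta_1,\theta_2)$ is dominated by the integrable function $\beta^*(\cdot,\theta_0)$; and the Fenchel--Young inequality with $s=p_0(r)$ gives $\beta^*(r,\theta_1+\theta_2 X(r))\ge(\theta_1+\theta_2 X(r))p_0(r)-\beta(r,p_0(r))$, whose right-hand side is $\mu$-integrable because $\int(\theta_1+\theta_2 X)p_0\,d\mu=\theta_1+\theta_2 b_0$ is finite by~\eqref{Hexp} and $\int\beta(r,p_0(r))\,d\mu=H(p_0)=0$. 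Hence $K(\theta_1,\theta_2)<\infty$, so~\eqref{poscon} holds.

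For essential smoothness of $K$ I would argue by contradiction. Suppose $K$ is essentially smooth but~\eqref{poscon} fails, so $\dom K\subseteq\{\theta_2\ge 0\}$ while $\theta^*:=(\theta_0,0)\in\dom K$ by Remark~\ref{r-p0}; as every neighbourhood of $\theta^*$ meets the set $\{\theta_2<0\}$, which is disjoint from $\dom K$, the point $\theta^*$ is a non-interior point of $\dom K$, and $\intdom K\neq\emptyset$ by essential smoothness. Choosing any $\bar\theta\in\intdom K$, the segment from $\theta^*$ to $\bar\theta$ lies (except at $\theta^*$) in $\intdom K$, so essential smoothness forces $K'(\theta^*;\bar\theta-\theta^*)=-\infty$. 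On the other hand $\theta^*\in\Theta$ and $p_{\theta^*}=p_0$, so the directional-derivative formula used in the proof of Theorem~\ref{second} (from \cite[Lemma 3.6, Remark 3.7]{CsiszarMatus2012art}) gives $K'(\theta^*;\bar\theta-\theta^*)=\int(\bar\theta_1-\theta_0+\bar\theta_2 X)\,p_0\,d\mu=(\bar\theta_1-\theta_0)+\bar\theta_2 b_0$, which is finite --- a contradiction, so~\eqref{poscon} must hold. I expect this last step to be the main obstacle: one has to recognise that the failure of~\eqref{poscon} places $(\theta_0,0)$ on the boundary of $\dom K$ at precisely a point where the two-variable convex-integral-functional calculus still yields a finite one-sided derivative, which is incompatible with essential smoothness.
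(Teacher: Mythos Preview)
Your proposal is correct and follows essentially the same route as the paper. The first assertion you spell out explicitly (via monotonicity of $F$ on $(m,b_0)$ and Theorem~\ref{second}) the paper treats as immediate. For finite $m$ both you and the paper use monotonicity of $\beta^*$ to bound $\beta^*(r,\theta_1+\theta_2 X(r))$ above by the integrable $\beta^*(r,\theta_0)$; your extra Fenchel--Young lower bound is not needed since the paper has already established that $K=J^*$ is proper, but it does no harm. For essential smoothness your contradiction is the same as the paper's: the paper phrases it as ``$\theta^*=(\theta_0,0)$ could not maximise $\theta_1+\theta_2 b_0-K(\theta)$'' (which it does, by Remark~\ref{r-p0} and $F(b_0)=0$), while you compute the directional derivative of $K$ at $\theta^*$ directly via the same formula from \cite{CsiszarMatus2012art} and note it is finite --- these are two wordings of the identical step.
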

\begin{proof}
 If $m$ is finite then each $\theta_2<0$ 
satisfies condition~\eqref{poscon} with some $\theta_1$. Indeed, since
$\theta_1+\theta_2 X\le\theta_1+\theta_2 m$ $\mu$-a.e.,
if the right hand side is less than $\theta_0$ in Remark
\ref{r-p0} then $(\theta_1,\theta_2)\in\dom\,K$. If $K$ is essentially smooth 
then condition~\eqref{poscon} holds because $\intdom\,K$ contains
 $\theta^*=(\theta_0,0)$. Indeed, otherwise the directional derivatives of $K$
at $\theta^* $  in directions towards interior points were equal to $-\infty$,
and $\theta^*$ could not maximize $\theta_1+\theta_2 b_0-K(\theta)$.
\end{proof}
\begin{corollary} \label{cor-2}
If $k_{\max}>0$ then $\int p d\mu=1,\:H(p)<+\infty$ imply
\mbox{$\int X p d\mu>-\infty$.}
\end{corollary}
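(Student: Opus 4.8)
The goal is: assuming $k_{\max}>0$, every density $p$ with $\int p\,d\mu = 1$ and $H(p) < +\infty$ satisfies $\int X p\,d\mu > -\infty$. By Corollary~\ref{cor-1} (or directly by Theorem~\ref{second}), the hypothesis $k_{\max}>0$ is equivalent to condition~\eqref{poscon}: there exists $\theta = (\theta_1,\theta_2) \in \dom\,K$ with $\theta_2 < 0$. The plan is to exploit the Fenchel–Young inequality for the conjugate pair $\beta(r,\cdot)$ and $\beta^*(r,\cdot)$ at this point $\theta$ to bound $X p$ from below by an integrable function.

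First I would fix $\theta = (\theta_1,\theta_2) \in \dom\,K$ with $\theta_2 < 0$, so that $\int \beta^*(r,\theta_1 + \theta_2 X(r))\,\mu(dr) = K(\theta) < +\infty$. The Fenchel–Young inequality applied pointwise to the second variable of $\beta$ gives, for $\mu$-a.e.\ $r$,
\begin{equation*}
(\theta_1 + \theta_2 X(r))\, p(r) \;\le\; \beta(r, p(r)) + \beta^*(r, \theta_1 + \theta_2 X(r)).
\end{equation*}
Rearranging, and writing $X p = \tfrac{1}{\theta_2}\big[(\theta_1 + \theta_2 X)p - \theta_1 p\big]$, I obtain
\begin{equation*}
\theta_2 X(r) p(r) \;\le\; \beta(r,p(r)) + \beta^*(r,\theta_1 + \theta_2 X(r)) - \theta_1 p(r).
\end{equation*}
Since $\theta_2 < 0$, dividing flips the inequality:
\begin{equation*}
X(r) p(r) \;\ge\; \frac{1}{\theta_2}\Big[\beta(r,p(r)) + \beta^*(r,\theta_1 + \theta_2 X(r)) - \theta_1 p(r)\Big].
\end{equation*}
Now the right-hand side is $\mu$-integrable: $\int \beta(r,p(r))\,d\mu = H(p) < +\infty$ by hypothesis (and $H(p) \ge 0$ by~\eqref{Hnull}, so there is no issue of an ill-defined integral), $\int \beta^*(r,\theta_1+\theta_2 X(r))\,d\mu = K(\theta) < +\infty$ and $\beta^* \ge$ an affine function so it is genuinely integrable, and $\int \theta_1 p\,d\mu = \theta_1 < \infty$. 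Hence $(Xp)^-$ is dominated by an integrable function, which forces $\int X p\,d\mu > -\infty$.

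The one technical point requiring care — and the main obstacle — is making sure each term on the right-hand side is integrable in the proper sense (not merely bounded above by an integrable function), since $H(p)$ and $K(\theta)$ are a priori only known to be finite as integrals of functions bounded below. For $H(p)$: by~\eqref{Hnull} the minimum of $H$ over densities is $0$, but $\beta(r,p(r))$ itself need not be bounded below uniformly in $r$; however, using Remark~\ref{r-p0}, $\beta(r,p(r)) \ge \theta_0 p(r) - \beta^*(r,\theta_0)$ pointwise (Fenchel–Young at $\theta_0$), and $\int[\theta_0 p - \beta^*(r,\theta_0)]\,d\mu = \theta_0 - K(\theta_0,0) = H(p_0) = 0$ is finite, so $\beta(r,p(r))$ is bounded below by an integrable function, and since $\int \beta(r,p(r))\,d\mu = H(p) < \infty$ it is integrable. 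Similarly $\beta^*(r,\theta_1+\theta_2 X(r)) \ge 0 \cdot(\theta_1+\theta_2 X(r)) - \beta(r,0) \ge -\beta(r,0)$, and $\beta(r,0) = \lim_{s\downarrow 0}\beta(r,s)$; one checks $\int \beta(r,0)\,d\mu$ is controlled (e.g.\ $\beta(r,0) \le \beta(r,p_0(r)) - \beta'(r,p_0(r)) p_0(r)$-type estimates, or simply note $\beta^*(r,x) \ge -\beta(r,0) \ge \theta_0 \cdot 0 - $ something integrable via the conjugacy at $\theta_0$). Once both are genuinely integrable, the displayed lower bound for $Xp$ is an integrable function, and integrating both sides yields $\int Xp\,d\mu \ge \tfrac{1}{\theta_2}[H(p) + K(\theta) - \theta_1] > -\infty$, completing the proof.
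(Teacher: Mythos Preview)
Your proposal is correct and follows essentially the same route as the paper: both apply the Fenchel--Young inequality $xs\le\beta(r,s)+\beta^*(r,x)$ with $s=p(r)$ and $x=\theta_1+\theta_2 X(r)$ for some $(\theta_1,\theta_2)\in\dom\,K$ with $\theta_2<0$ (supplied by condition~\eqref{poscon}), then integrate to obtain $\theta_1+\theta_2\int Xp\,d\mu\le H(p)+K(\theta)<+\infty$, whence $\int Xp\,d\mu>-\infty$. The paper's proof is three lines and does not pause over the integrability subtleties you raise; your extra care there is not wrong, but the argument does not require it at the level of detail the paper works at.
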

\begin{proof}
Substitute in the Fenchel inequality  $ xs\le\beta(r,s)+   \beta^*(r,x) $
(a consequence of \eqref{betastar})
$x:=\theta_1+\theta_2X(r)$, $s:=p(r)$ and integrate. It follows that
if $(\theta_1,\theta_2)\in\dom\,K$ and $p$ satisfies the hypotheses then
$$ \theta_1+\theta_2 \int Xpd\mu\le H(p)+K(\theta_1,\theta_2)<+\infty.$$
Taking $(\theta_1,\theta_2)$ as in~\eqref{poscon}, the assertion follows. 
\end{proof}

\section{MaxLoss over Bregman balls and $f$-divergence balls} \label{sec-applications}
We now come back to the more specific choices \eqref{eq-Gamma}, where $\Gamma$
is a ball of distributions in terms of some divergence $D$, centered at some 
$\Pd_0$.  

\paragraph{Relative entropy balls}
Let us briefly check how the unified framework leads,
in the special case of relative entropy balls, 
to the results of \citet[Theorem 1]{BreuerCsiszar2012} reported in
Section~\ref{sec-intuition}.

Set $\mu=\Pd_0$ and take $\beta(r, s):=
f(s) := s\log s - s +1$. Then
$\beta^*(r, x)=f^*(x)=\exp(x)-1$ and ${\beta^*}'(r, x)=(f^*)'(x)=\exp(x)$. 
Hence, using \eqref{eq-def-Lambda} and \eqref{K}, 
$$ K(\theta_1,\theta_2)=\int(\exp(\theta_1+\theta_2 X)-1)\,
d\Pd_0=\exp(\theta_1+\Lambda(\theta_2))-1 $$
and $\Theta=\dom\;K=\R\times\dom\;\Lambda.$ 
The 
functions $p_\theta$, $\theta\in\Theta$ of \eqref{pte} 
are of form
$\exp(\theta_1 + \theta_2 X( r ))$, and integrate to $1$ if and only if
$\theta_1=-\Lambda(\theta_2)$. Then $p_\theta$ is the $\Pd_0$-density of 
$\Pd(\theta_2 )$ in the exponential family~\eqref{eq-wcs}.

The first equation in 
\eqref{dK} requires 
$p_\theta$ to be a density, thus $\theta_1= -\Lambda(\theta_2)$. Then
$\int Xp_\theta d\Pd_0=\Lambda'(\theta_2)$, see~\eqref{eq-Lambda'}, and the
second equation in 
\eqref{dK} reads 
$-  \Lambda(\theta_2)    +\theta_2 \Lambda'(\theta_2)=k$, which is 
\eqref{eq-def-theta}. Thus Theorem~\ref{main} gives the result in 
Section~\ref{sec-intuition} that if ~\eqref{eq-Lambda'} has a negative
solution $\thbar$ then $V(k)=\Lambda'(\thbar)$, a worst case scenario
exists, and its density is $p_{\thbar}$.


\paragraph{$f$-divergence balls}
Setting  $\mu=\Pd_0$ again, take now any autonomous integrand for $\beta$
given by a convex function $f$ as in Section~\ref{sec-relative entropy}, and
let
\begin{equation}\label{eq-Hf}
H(p):=\int f(p)\,d\Pd_0.
\end{equation}
Then the set $\Gamma$ of distributions given by~\eqref{hbek} is equal to the
$f$-divergence ball $\{\Pd:D_f(\Pd||\Pd_0)\le k\}$ if $f$ is cofinite,
while if $f'(+\infty):=\lim_{s\to\infty}f(s)/s$ is finite, $\Gamma$
is a proper subset of that ball. We will focus on $\Gamma$ defined by 
~\eqref{hbek} anyway.

If $f$ is not cofinite then $f^*(x)=+\infty$ for $x>f'(+\infty)$, hence
$$ K(\theta_1,\theta_2)=\int f^*(\theta_1+\theta_2 X)\, d\Pd_0$$
is infinite when $\theta_2<0$, unless $m:=\ess\:\, \inf(X)$ is finite.
By Corollary~\ref{cor-2} of Theorem~\ref{second}, this means that the 
functional~\eqref{eq-Hf} can be adequate for assigning model risk only if
$f$ is cofinite or if $X$ is essentially bounded below. In the latter case,
$(\theta_1,\theta_2)$ with $\theta_2<0$ belongs to $\intdom K$ if and only if
$\theta_1+\theta_2 m<f'(+\infty)$.

The most poular $f$-divergences are the \emph{power divergences}, defined by
$$
f_{\alpha}(s):=[s^\alpha -\alpha(s-1) -1]/[\alpha(\alpha-1)],\quad \alpha\in\R.
$$
Formally, $f_{\alpha}$ is undefined if $\alpha=0$ or $\alpha=1$, but 
the definition is commonly extended by limiting, thus
$$ f_0(s):=\log s+s-1,\quad\quad f_1(s):=s\log s -s+1. $$
This means that also $D_{f_0}(\Pd||\Pd_0)=I(\Pd_0||\Pd)$ and
$D_{f_1}(\Pd||\Pd_0)=I(\Pd||\Pd_0)$ are regarded as power divergences. Note
that the function $f_{\alpha}$ is cofinite if and only if $\alpha\ge 1$, and 
$f'_{\alpha}(+\infty)=1/(1-\alpha)$ if $\alpha< 1$.

Let us determine the family of functions 
\begin{equation}\label{pow}
 p_{\theta}(r)=(f_{\alpha}^*)'(\theta_1+\theta_2 X(r)),\quad
\theta=(\theta_1,\theta_2)\in\Theta
\end{equation}
that contains the worst case densities in power divergence balls, more
exactly, in ~\eqref{hbek} with $f=f_{\alpha}$. Since
$f'_\alpha(s) = [s^{\alpha-1}-1]/(\alpha-1)$ grows from
$-\infty$ to $1/(1-\alpha)$ if $\alpha<1$ or from $1/(1-\alpha)$ to $+\infty$
if $\alpha>1$, as $s$ runs over $(0,+\infty)$. In the interval 
$(-\infty,1/(1-\alpha))$ or $(1/(1-\alpha),+\infty)$, respectively,
$(f_\alpha^*)'$ is the inverse function of $f'_\alpha$, thus
\begin{eqnarray*}
(f_\alpha^*)'(x)=[x(\alpha-1)+1]^{1/(\alpha-1)}&\mbox{if}& \alpha<1,\:x<1/(1-\alpha)\\
&\mbox{or}&\alpha>1,\:x>1/(1-\alpha).
\end{eqnarray*}
Clearly, $(f_\alpha^*)'(x)$ does not exist if $\alpha<1$ and
$x\ge 1/(1-\alpha)$,
while if $\alpha>1$ and $x\le 1/(1-\alpha)$ then $(f_\alpha^*)'(x)=0.$
This gives a simple formula for the functions $ p_{\theta}$ in~\eqref{pow}. Unlike for the relative entropy case, however, no
explicit condition is available for $\int  p_{\theta} d\Pd_0=1$, and the
two equations in Theorem~\ref{main} cannot be reduced to one.




\paragraph{Bregman balls}
In the special case $\mu=\Pd_0$, the Bregman distance~\eqref{Bf} reduces to 
$f$-divergence: If $f$ is a non-negative convex function with $f(1)=0$ and 
differentiable at $s=1$ then $\Delta_f(s,1)=f(s)$, consequently 
$$B_{f,\Qd}(\Pd,  \Qd) = D_f(\Pd || \Qd) \:\: \mbox{for $\Pd \ll \Qd$}.$$
Hence, in this subsection, $\mu$ is taken different from $\Pd_0$; for
simplicity, $f$ is assumed differentiable. To obtain for
$D$ in \eqref{eq-Gamma} resp. $H$ in \eqref{H} 
the Bregman distance $B_{f,\mu}$ of \eqref{Bfp}, 
we choose the non-autonomous integrand 
$$\beta(r, s)=  f(s) - f(p_0( r ))
-f'(p_0( r ))(s- p_0( r )).$$ To make sure that this meets the assumptions
on $\beta$, in case $f'(0)= -\infty$ we assume that the default 
density $ p_0$ is $\mu$-a.e. positive; this assumption is not needed
if  $f'(0)> -\infty$.

By \citet[Lemma 2.6]{CsiszarMatus2012art}, the convex conjugate of $\beta$
with respect to $s$ equals $$\beta^*(r,x) = f^*(x + f'(p_0( r )))- f^*( f'(p_0( r ))).$$
The function $K$ from \eqref{K} equals 
$$K(\theta):= \int_\Omega \left[ f^*(\theta_1 + \theta_2 X( r ) + f'(p_0( r )))- f^*( f'(p_0( r ))\right]d\mu( r ). $$ 
The 
family $\{p_\theta( r ): \theta\in\Theta\}$ is formed by the (non-negative) 
functions 
$$p_\theta( r ) = {\beta^*}'(\theta_1+\theta_2 X( r ))= {f^*}'[\theta_1 + \theta_2 X( r ) + f'(p_0( r ))].$$

Note that while the case of Bregman balls is covered by
our general results, it is not 
apparent that the current special form of $\beta$ would substantially 
simplify their application. 



\section{Evaluation of divergence preferences}
\label{sec-divergence preferences}
Finally, we briefly address divergence preferences, i.e., the
problem~\eqref{DP} which, in the framework of Section~\ref{sec-CIF}, is 
simpler than the minimization of $H(p)$ over the set~\eqref{hbek}.
Divergence preferences include as special case the multiplier preferences of 
\citet{HansenSargent2001}, when we choose the relative entropy~$I$ for $D$. 
\citet{MaccheroniMarinacciRustichini2006} choose for $D$ the more general 
weighted $f$-divergences 
\begin{equation}\label{Dwf}
D_f^w(\Pd,\Pd_0):=  \left\{ \begin{array}{ll}
\int_\Omega w( r ) f\left(\frac{d\Pd}{d\Pd_0}( r )\right) d\Pd_0 ( r ) & \mbox{if $\Pd \ll \Pd_0$,} \\
 +  \infty  & \mbox{otherwise,}
 \end{array} 
 \right.
\end{equation}
where $w$ is a normalised, non-negative weight function. 

Below, more generally, the role of $D$ is given to any convex functional
as in~\eqref{H}. Introducing a new convex integrand and intergal functional by
$$ \tilde{\beta}(r,s):= X(r)s+\lambda \beta(r,s),\quad \tilde{H}(p):=\int
\tilde{\beta}(r,p(r)) d\mu(r), $$
(where $\lambda >0$ is fixed), we can write
\begin{equation}\label{W1}
W:=\inf_{p:\int p d\mu=1}\,\left[\int Xp d\mu+\lambda H(p)\right]=
    \inf_{p:\int p d\mu=1}\,\tilde{H}(p).
\end{equation}
Thus, the problem is to minimize the functional $\tilde{H}(p)$ under the
single constraint $\int p d\mu=1$.

In analogy to~\eqref{J}, consider
$$ \tilde{J}(a):=\inf_{p:\int p d\mu=a}\,\tilde{H}(p), \quad a\in\R.$$
Note that $\tilde{\beta}$ meets the basic assumptions on $\beta$ (though
\eqref{Hnull} does not hold for $\tilde{H}$), and that
$$ (\tilde{\beta})^*(r,x)=\sup_s \left[xs-X(r)s-\lambda\beta(r,s)\right]=
   \lambda\beta^*\left(r,\frac{x-X(r)}{\lambda}\right).$$
It follows by~\cite[Theorem 1.1]{CsiszarMatus2012art} that the convex
conjugate of $\tilde{J}$ equals
$$\tilde{K}(\theta):=\int (\tilde{\beta})^*(r,\theta) d\mu(r)=
   \lambda \int \beta^* \left(r,\frac{\theta-X(r)}{\lambda}\right) d\mu(r),
   \quad \theta\in\R, $$
or, with the notation~\eqref{K},
$$\tilde{J}^*(\theta)=
 \tilde{K}(\theta)= \lambda K(\frac{\theta}{\lambda},- \frac{1}{\lambda}),
\quad \theta\in\R.$$
As the interior of $\dom\;\tilde{J}$ is $(0,+\infty)$, it follows that 
$\tilde{J}(a)=\tilde{K}^*(a)$ for each $a>0.$ In particular,
\begin{eqnarray}
W=\tilde{J}(1)=\tilde{K}^*(1)&=&\sup_{\theta\in\R}\,(\theta-\tilde{K}(\theta))
=\sup_{\theta\in\R}\left[\theta-\lambda K\left(\frac{\theta}{\lambda},- \frac{1}{\lambda}\right)\right] \label{Ktilde}\\ 
&=&\lambda\sup_{\theta_1\in\R}\left[\theta_1- K\left(\theta_1,- \frac{1}{\lambda}\right)\right]. \nonumber
\end{eqnarray}

\begin{proposition} \label{prop-DP}
The necessary and sufficient condition for $W>-\infty$ in~\eqref{W1} 
is the existence of $\theta_1\in\R$ with
\begin{equation}\label{nesu}
(\theta_1,-1/\lambda)\in \dom \; K,
\end{equation}
and then 
\begin{equation}\label{W3} 
W=\lambda \sup_{\theta_1}\left[\theta_1-K(\theta_1,-1/\lambda)\right].
\end{equation}
If for some $\theta=(\theta_1, -1/\lambda)$
 as in~\eqref{nesu} 
the function $p_{\theta}$ in~\eqref{pte} 
has integral equal to one, then
 $\theta_1$ attains the maximum in~\eqref{W3}, and 
$p=p_{\theta}$ attains the minimum in~\eqref{W1}. Otherwise, among the
numbers $\theta_1$ satisfying~\eqref{nesu} there exists
a largest one $\theta_{1\max}$, and $ p_{\theta}$ with 
$ \theta=(\theta_{1\max}, -1/\lambda)$ has integral less than one; 
then $\theta_1=\theta_{1\max}$ attains the maximum in~\eqref{W3}. 
\end{proposition}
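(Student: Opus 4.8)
The plan is to read everything off the identity $W=\lambda\sup_{\theta_1\in\R}\bigl[\theta_1-K(\theta_1,-1/\lambda)\bigr]$ obtained in \eqref{Ktilde}, supplemented by the one-constraint minimiser characterization of Remark~\ref{r-p0} applied to the auxiliary integrand $\tilde\beta$ and functional $\tilde H$. For the first two assertions there is nothing further to prove: $K$, being a proper convex function, never takes the value $-\infty$, so the supremum in \eqref{Ktilde} exceeds $-\infty$ exactly when $K(\theta_1,-1/\lambda)<+\infty$ for at least one $\theta_1$, i.e. exactly when \eqref{nesu} holds, and in that case $W$ is given by \eqref{W3}.

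For the third assertion I would invoke the instance of \cite[Lemma 4.10]{CsiszarMatus2012art} quoted in Remark~\ref{r-p0}, but now for $\tilde\beta,\tilde H$ (which, as noted, meet the standing hypotheses on $\beta$) with the constant moment mapping $r\mapsto1$ and $a=1$. Using the already-computed $(\tilde\beta)^*(r,x)=\lambda\beta^*(r,(x-X(r))/\lambda)$, hence $(\tilde\beta^*)'(r,x)=(\beta^*)'(r,(x-X(r))/\lambda)$, the candidate minimisers $(\tilde\beta^*)'(r,\theta)$ are exactly the functions $p_{(\theta/\lambda,-1/\lambda)}$ of \eqref{pte}, and $\mu$-integrability of $(\tilde\beta)^*(r,\theta)$ is precisely condition $(\theta/\lambda,-1/\lambda)\in\dom K$. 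Hence, writing $\theta_1=\theta/\lambda$: if some $p_\theta$ with $\theta=(\theta_1,-1/\lambda)$ as in \eqref{nesu} has $\int p_\theta\,d\mu=1$, it is the (unique) minimiser in \eqref{W1}, and the minimum equals $\theta-\int(\tilde\beta)^*(r,\theta)\,d\mu=\lambda[\theta_1-K(\theta_1,-1/\lambda)]$; comparing with \eqref{W3} shows this $\theta_1$ attains the supremum. (In the complementary situation the same lemma shows \eqref{W1} has no minimiser at all.)

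The fourth assertion carries the real content. I would analyse $\varphi(\theta_1):=\int p_{(\theta_1,-1/\lambda)}\,d\mu\in[0,+\infty]$. It is non-decreasing, because $x\mapsto(\beta^*)'(r,x)$ is; it is continuous and tends to $0$ as $\theta_1\to-\infty$, by monotone convergence, since $(\beta^*)'(r,\cdot)$ is continuous as a $[0,+\infty]$-valued map and $(\beta^*)'(r,x)\downarrow0$ as $x\to-\infty$. In the present case $\varphi$ is nowhere equal to $1$, so by the intermediate value theorem $\varphi<1$ throughout the interval where it is defined. Consequently $g(\theta_1):=\theta_1-K(\theta_1,-1/\lambda)$ is concave with derivative $1-\varphi(\theta_1)>0$ wherever $K$ is differentiable along the line $\theta_2=-1/\lambda$ (use \eqref{grad}), hence non-decreasing, strictly so in the interior of $\{\theta_1:(\theta_1,-1/\lambda)\in\dom K\}$; moreover \eqref{Hnull} gives $K(\theta_1,-1/\lambda)=J^*(\theta_1,-1/\lambda)\ge\theta_1-b_0/\lambda$, so $g\le b_0/\lambda$. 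Boundedness of $g$ together with $\varphi<1$ rules out $\theta_{1\max}:=\sup\{\theta_1:(\theta_1,-1/\lambda)\in\dom K\}=+\infty$: were it infinite, then $(\theta_1,-1/\lambda)\in\dom K$ for all $\theta_1$ would force $\beta'(r,+\infty)=+\infty$ $\mu$-a.e. (whence $\Theta=\dom K$), so $p_{(\theta_1,-1/\lambda)}(r)\to+\infty$ $\mu$-a.e. as $\theta_1\to+\infty$ and Fatou's lemma would give $\varphi(\theta_1)\to+\infty$, contradicting $\varphi<1$. Finally, lower semicontinuity of $K$ yields $K(\theta_{1\max},-1/\lambda)\le\lim_{\theta_1\uparrow\theta_{1\max}}K(\theta_1,-1/\lambda)<+\infty$, so $\theta_{1\max}$ is the largest $\theta_1$ satisfying \eqref{nesu}; then $\varphi(\theta_{1\max})=\lim_{\theta_1\uparrow\theta_{1\max}}\varphi(\theta_1)\le1$, hence $<1$ as it is not $1$; and $g(\theta_{1\max})\ge\lim_{\theta_1\uparrow\theta_{1\max}}g(\theta_1)=\sup_{\theta_1}g$, so the supremum in \eqref{W3} is attained at $\theta_{1\max}$.

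The hard part will be exactly this last step: extracting from the hypothesis ``no admissible $p_\theta$ has unit mass'' the structural facts that $\theta_{1\max}$ is finite, lies in $\dom K$, and genuinely realizes the supremum rather than merely being approached. The ingredients that make it work are the monotonicity and monotone-convergence continuity of $\theta_1\mapsto\int p_{(\theta_1,-1/\lambda)}\,d\mu$, the a-priori bound $g\le b_0/\lambda$ inherited from \eqref{Hnull}, and the lower semicontinuity of $K=J^*$; the first three assertions are comparatively routine once \eqref{Ktilde} and Remark~\ref{r-p0} are in hand.
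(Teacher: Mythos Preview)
Your treatment of the first three assertions matches the paper's: both read them off \eqref{Ktilde} together with the one-constraint instance of \cite[Lemma 4.10]{CsiszarMatus2012art} quoted in Remark~\ref{r-p0}, applied to $\tilde\beta,\tilde H$. The fourth assertion is where you diverge. The paper dispatches it in one stroke of convex duality: since $a=1$ lies in the interior of $\dom\,\tilde K^*\supset(0,+\infty)$, the supremum defining $\tilde K^*(1)$ is automatically attained (nonemptiness of the subgradient at interior points), so a maximiser $\theta_1^*$ of \eqref{W3} exists; at $\theta_1^*$ the left $\theta_1$-derivative of $K$ (which equals $\int p_{(\theta_1^*,-1/\lambda)}\,d\mu$) must be $\le1$, and strict inequality is possible only at the right endpoint $\theta_{1\max}$ of the slice. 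Your route is more constructive: you show $g(\theta_1)=\theta_1-K(\theta_1,-1/\lambda)$ is strictly increasing (via $\varphi<1$) and bounded above by $b_0/\lambda$ (via \eqref{Hnull} and $K=J^*$), rule out $\theta_{1\max}=+\infty$ by a Fatou argument, and then use lower semicontinuity of $K$ to place $\theta_{1\max}$ in $\dom K$ and realise the supremum there. The paper's argument is shorter, invoking a standard black box; yours is self-contained and exposes the mechanism (the monotonicity of $\varphi$ and the a-priori bound from \eqref{Hnull}), at the price of the extra Fatou and lsc steps. One small repair: your appeal to \eqref{grad} for $g'=1-\varphi$ needs $(\theta_1,-1/\lambda)\in\intdom K$, the two-dimensional interior, which the slice need not meet; it is cleaner to differentiate $K(\theta_1,-1/\lambda)=\int\beta^*(r,\theta_1-X(r)/\lambda)\,d\mu$ directly in $\theta_1$ under the integral, which yields $\varphi(\theta_1)$ on the interior of the one-dimensional slice without further hypothesis.
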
 

\begin{proof}
Clearly, $W=\tilde{J}(1)>-\infty$ if and only if $\tilde{J}$ never equals
$-\infty$, thus
its conjugate $\tilde{K}$ is not identically $+\infty$; by the formula for
$\tilde{K}$, this proves the first assertion. 
The second assertion follows from \eqref{Ktilde}.
As the supremum in~\eqref{W3} is the same as the supremum defining 
$\tilde{K}^*(1)$ in \eqref{Ktilde} (with $\theta/\lambda$ substituted by
$\theta_1$), the next assertion follows from the simple instance of 
\cite[Lemma 4.10]{CsiszarMatus2012art} used in Remark~\ref{r-p0} 
(note that the function $(\beta^*)'(r,\theta)$ there, replacing
$\beta$ by $\tilde{\beta}$ and $\theta$ by $\theta_1\lambda$, gives
the function $p_{\theta}$ in the Proposition).
For the last assertion, recall that the maximum in the definition of
$\tilde{K}^*(1)$, and therefore in~\eqref{W3}, is always attained,
because $a=1$ is in the interior
of $\dom\,\tilde{K}^*$ (as in Remark~\ref{r-p0}). 
Then the (left) derivative by $\theta_1$ of $K(\theta_1,-1/\lambda)$
at the maximiser, say $\theta_1^*$,  has to be $\le 1$, and the strict 
inequality
can hold only if $\theta^*=\theta_{1\max}$. As the mentioned derivative
equals the integral of $p_{\theta^*}$ with  
$\theta^*=(\theta_1^*,-1/\lambda)$, this completes the proof.
\end{proof}

\paragraph{Evaluation of multiplier preferences}
As an example apply Proposition~\ref{prop-DP} to reproduce a result of \citet{HansenSargent2001}. We evaluate the objective function of an agent with multiplier preferences \eqref{DP} choosing for $D$ the relative entropy. This corresponds to the choice 
$\beta(r,s) = s \log s - s +1$, and $\mu=\Pd_0$. 
In this case, the condition for $W>-\infty$ in Proposition~\ref{prop-DP}
becomes $-1/\lambda\in \dom\;\Lambda$. Under that condition, the function
$p_{\theta}$ with 
$$\theta=(\thbar_1   , -1/\lambda),\quad\thbar_1=-\Lambda(-\frac{1}{\lambda})$$
has integral equal to one,
hence the Proposition gives that this $p_{\theta}$, namely 
the member $\Pd(-1/\lambda)$ of the exponential family with parameter 
value $-1/\lambda$, attains the minimum in the definition~\eqref{W1} of $W$.
It also follows that  
$$W=\lambda \thbar_1 = -\lambda \Lambda( -\frac{1}{\lambda}).$$

\newpage
\bibliographystyle{abbrvnat}
\bibliography{GenEnt}
\end{document}